\newcommand{\Fb}{{\bar{F}}}
\newcommand{\xb}{{\boldsymbol{ x}}}
\newcommand{\al}{\alpha}
\newcommand{\be}{\beta}
\newcommand{\la}{\lambda}
\newcommand{\R}{\mathbb{R}}
\newcommand{\lab}{\boldsymbol{\la}}
\newcommand{\pb}{\boldsymbol{p}}
\newcommand{\ub}{\boldsymbol{u}}
\newcommand{\vb}{\boldsymbol{v}}
\newcommand{\mub}{\boldsymbol{\mu}}
\newcommand{\bq}{\begin{equation}}
\newcommand{\eq}{\end{equation}}
\newcommand{\bqs}{\begin{equation*}}
\newcommand{\eqs}{\end{equation*}}
\newcommand{\bqa}{\begin{eqnarray}}
\newcommand{\eqa}{\end{eqnarray}}
\newcommand{\bqas}{\begin{eqnarray*}}
\newcommand{\eqas}{\end{eqnarray*}}
\newcommand{\bc}{\begin{cases}}
\newcommand{\ec}{\end{cases}}
\newcommand{\bt}{\begin{thm}}
\newcommand{\et}{\end{thm}}
\newtheorem{theorem}{Theorem}[section]
\newtheorem{lemma}{Lemma}[section]
\newtheorem{example}{Example}[section]
\newtheorem{definition}{Definition}[section]
\title{Stochastic comparisons of the largest claim amounts from two sets of interdependent heterogeneous portfolios}
\author{ Hossein Nadeb, Hamzeh Torabi, Ali Dolati\\ \\
Department  of Statistics, Yazd University,  Yazd, Iran,\\
}
\begin{document}
\date{}
\maketitle
\begin{abstract}
Let $ X_{\la_1},\ldots,X_{\la_n}$ be dependent non-negative
random variables and
$Y_i=I_{p_i} X_{\la_i}$, $i=1,\ldots,n$, where $I_{p_1},\ldots,I_{p_n}$ are independent Bernoulli random variables independent of
$X_{\la_i}$'s, with ${\rm E}[I_{p_i}]=p_i$, $i=1,\ldots,n$. In
actuarial sciences, $Y_i$ corresponds to the claim amount in a
portfolio of risks. In this paper, we compare the largest claim amounts of two sets of
interdependent portfolios, in the
sense of usual stochastic order, when the variables in one set have the parameters
$\lambda_1,\ldots,\lambda_n$ and $p_1,\ldots,p_n$ and the variables in the other set  have
the parameters $\lambda^{*}_1,\ldots,\lambda^{*}_n$ and $p^*_1,\ldots,p^*_n$. For illustration,
we apply the results to some important models in actuary.
\end{abstract}
{\bf Keywords} 
Copula, Largest claim amount, Majorization, Stochastic ordering.\\


\section{Introduction}
Suppose that $X_{\la_i}$, with survival function $\Fb(x;\la_i)$, denotes the total random severities
of $i$th $(i=1,\ldots,n)$ policyholder in an insurance period, and let $I_{p_i}$ be a
Bernoulli random variable associated with $X_{\la_i}$, such that
$I_{p_i}=1$ whenever the $i$th policyholder makes random claim amounts
$X_{\la_i}$ and $I_{p_i}=0$ whenever does not make a claim. In this
notation, $Y_i= I_{p_i} X_{\la_i}$ is the claim amount associated with $i$th policyholder and $(Y_1,\ldots,Y_n)$ is said to be a portfolio of
risks. Further, consider another portfolio of risks $(Y^*_1,\ldots,Y^*_n)$ with the parameter vectors $\lab^*$ and $\pb^*$.

The annual premium is the amount paid by the policyholder as the cost
of the insurance cover being purchased. In fact, it is the primary
cost to the policyholder for assigning the risk to the insurer
which depends on the type of insurance. Determination of the
annual premium is one of the important problems in insurance
analysis. Deriving preferences
between random future gains or losses is an
appealing topic for the actuaries. For this purpose, stochastic orderings are very
helpful. Stochastic orderings have been extensively used in some
areas of sciences such as management science, financial economics,
insurance, actuarial science, operation research, reliability
theory, queuing theory and survival analysis. For more details on
stochastic orderings, we refer to M\"{u}ller and Stoyan
\cite{must}, Shaked and Shanthikumar \cite{ss} and Li and Li
\cite{lll}.

The problem of stochastic comparisons of some important statistics in $(Y_1,\ldots,Y_n)$ and $(Y^*_1,\ldots,Y^*_n)$, such as
the number of claims, $\sum_{i=1}^n I_{p_i}$, the aggregate claim amounts, $\sum_{i=1}^n Y_i$, the smallest, $Y_{1:n}=\min(Y_1,\ldots,Y_n)$,
and the largest claim amounts, $Y_{n:n}=\max(Y_1,\ldots,Y_n)$ in two portfolios, have been
discussed by many researchers in literature; see, e.g., Karlin and
Novikoff \cite{kar}, Ma \cite{ma}, Frostig \cite{fro}, Hu and Ruan
\cite{huru}, Denuit and Frostig \cite{defr}, Khaledi and Ahmadi
\cite{khah}, Zhang and Zhao \cite{zz}, Barmalzan et al.
\cite{bar1}, Li and Li \cite{lili}, Barmalzan et al.\cite{bar2018}, Barmalzan and Najafabadi
\cite{bana}, Barmalzan et al. \cite{bar3}, Barmalzan et al.
\cite{bar2}, Balakrishnan et al. \cite{baet} and Li and Li\cite{lili2}. 

When the critical situations occur, such as earthquakes, tornadoes and epidemics, the role of the insurance companies is very highlighted. Usually, in these situations many of policies are simultaneously at risk and the severities have a positive dependence. The most of published articles consider the case that the severities are independent, while sometimes this assumption is not satisfied.

Assume that $X_{\la_1},\ldots,X_{\la_n}$ are continuous and non-negative random variables with the joint distribution
function $H(x_1,\ldots,x_n)$, marginal distribution (survival) functions $F(x;\la_1),\ldots,F(x;\la_n)$ ($\Fb(x;\la_1),\ldots,\Fb(x;\la_n)$), and the copula $C$ through the relation $H(x_1,\ldots,x_n)=C\left(F(x;\la_1),\ldots,F(x;\la_n)\right)$ in the view of the Sklar's Theorem; see Nelsen \cite{nel}.

In this paper, we first focus on the stochastic comparisons of the largest claim amounts from two sets of heterogeneous portfolios in the sense of usual stochastic ordering, when the both portfolios include two policies. Then, some results in the case that the portfolios include more than two policies are provided.

The rest of the paper is organized as follows. In Section \ref{sec2}, we recall some definitions
and lemmas which will be used in the sequel. In Section \ref{sec3}, stochastic comparisons of the largest
claim amounts from two interdependent heterogeneous portfolios of risks in a general model in the sense
of the usual stochastic ordering is discussed. Also, some examples are illustrated to show the validity of the results.

\section{The basic definitions and some prerequisites}\label{sec2}
In this section, we recall some notions of stochastic orderings,
majorization, weakly majorization, copula and  some
useful lemmas which are helpful to  prove the main results.
Throughout the paper, we use the notations $ \Bbb R =
(-\infty,+\infty) $, $ \Bbb R_{+} = [0,+\infty) $ and $ \Bbb R_{++} = (0,+\infty) $
\begin{definition}
$ X $ is said to be smaller than $ Y $ in the usual stochastic ordering, denoted by $ X \leq_{\rm st} Y $, if $ \bar{F}(x)\leq\bar{G}(x) $ for all  $ x \in \Bbb R$.





\end{definition}

For a comprehensive discussion of various stochastic orderings, we
refer to Li and Li \cite{lll} and Shaked and Shanthikumar
\cite{ss}.

We also need the concept of majorization of vectors
and the Schur-convexity and Schur-concavity of functions. For a
comprehensive discussion of these topics, we refer to Marshall et
al. \cite{met}. We use the notation $ x_{1:n}\leq x_{2:n}\leq
\ldots \leq x_{n:n}$ to denote the increasing arrangement of the
components of the vector $ \boldsymbol{x} = (x_{1}, \ldots ,
x_{n})$.
\begin{definition}
{\rm The vector $ \boldsymbol{x} $ is said to be
\begin{itemize}
\item[ (i)] weakly submajorized by the vector $ \boldsymbol{y} $ (denoted by $ \boldsymbol{x}\preceq_{\rm w}\boldsymbol{y} $) if
$\sum_{i=j}^{n}x_{i:n}\leq \sum_{i=j}^{n}y_{i:n}$ for all $j = 1, \ldots , n $,

\item[ (ii)] weakly supermajorized by the vector $ \boldsymbol{y} $ (denoted by $ \boldsymbol{x}\mathop \preceq \limits^{{\mathop{\rm w}} }\boldsymbol{y} $) if $ \sum_{i=1}^{j}x_{i:n}\geq \sum_{i=1}^{j}y_{i:n} $ for all $ j = 1, \ldots , n $,

\item[ (iii)] majorized by the vector $ \boldsymbol{y} $ (denoted by $ \boldsymbol{x}\mathop \preceq \limits^{{\mathop{\rm m}} }\boldsymbol{y} $) if $ \sum_{i=1}^{n}x_{i}= \sum_{i=1}^{n}y_{i}$ and $\sum_{i=1}^{j}x_{i:n}\geq \sum_{i=1}^{j}y_{i:n}$ for all $j = 1, \ldots , n-1 $.
\end{itemize}}
\end{definition}
\begin{definition}
{\rm A real valued function $ \phi $ defined on a set $ \mathscr{A}\subseteq {\Bbb R}^{n} $ is said to be Schur-convex (Schur-concave) on $ \mathscr{A} $ if

\[
\boldsymbol{x} \mathop \preceq \limits^{{\mathop{\rm m}} }\boldsymbol{y} \quad \text{on}\quad \mathscr{A} \Longrightarrow \phi(\boldsymbol{x})\leq (\geq)\phi(\boldsymbol{y}).
\]}
\end{definition}


\begin{lemma}[Marshall et al.\cite{met}, Theorem 3.A.4]\label{3a4}
{\rm Let $ \mathscr{A}\subseteq \Bbb R $ be an open set and let $\phi:  \mathscr{A}^n\rightarrow  \Bbb R$ be continuously differentiable. $\phi$ is Schur-convex (Schur-concave) on $\mathscr{A}^n$ if and only if, $\phi$ is symmetric on  $\mathscr{A}^n$ and for all $i\neq j$,
\begin{equation*}
(x_i-x_j)\left(\frac{\partial{\phi(\boldsymbol{x})}}{\partial{x_i}}-\frac{\partial{\phi(\boldsymbol{x})}}{\partial{x_j}}\right)\geq(\leq)0, \quad \text{for all}\quad \boldsymbol{x}\in \mathscr{A}^n.
\end{equation*}
}
\end{lemma}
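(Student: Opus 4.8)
The plan is to recognize this as the classical Schur--Ostrowski criterion and prove both implications through the interplay between majorization and elementary ``Robin Hood'' transfers, using continuous differentiability to pass between finite transfers and derivatives.

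For necessity, assume $\phi$ is Schur-convex. To obtain symmetry, I would note that for any permutation matrix $\Pi$ the vectors $\boldsymbol{x}$ and $\Pi\boldsymbol{x}$ have identical ordered components, so each is majorized by the other; applying the defining implication of Schur-convexity in both directions forces $\phi(\boldsymbol{x})=\phi(\Pi\boldsymbol{x})$, i.e.\ $\phi$ is symmetric. For the derivative inequality, fix $i\ne j$ with $x_i>x_j$ and consider the perturbation $g(\varepsilon)=\phi(\ldots,x_i-\varepsilon,\ldots,x_j+\varepsilon,\ldots)$ that shrinks the gap between the two coordinates. For $\varepsilon\in[0,(x_i-x_j)/2]$ the perturbed vector is majorized by $\boldsymbol{x}$, so Schur-convexity gives $g(\varepsilon)\le g(0)$; differentiating at $\varepsilon=0^+$ (legitimate by continuous differentiability) gives $g'(0)=-\partial_i\phi+\partial_j\phi\le 0$, that is $\partial_i\phi\ge\partial_j\phi$, which is exactly $(x_i-x_j)(\partial_i\phi-\partial_j\phi)\ge 0$.

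For sufficiency, assume $\phi$ is symmetric and satisfies the sign condition, and suppose $\boldsymbol{x}$ is majorized by $\boldsymbol{y}$. The key reduction is the Hardy--Littlewood--P\'olya fact that $\boldsymbol{x}$ is obtainable from $\boldsymbol{y}$ by a finite sequence of T-transforms, each altering only two coordinates by moving them toward their common average. By symmetry of $\phi$ it suffices to treat one such transfer on a fixed pair, say the first two, with $y_1\ge y_2$ and $x_1=y_1-\delta$, $x_2=y_2+\delta$, $0\le\delta\le(y_1-y_2)/2$. Setting $h(t)=\phi(y_1-t,y_2+t,y_3,\ldots,y_n)$ gives $h'(t)=-(\partial_1\phi-\partial_2\phi)$; since $y_1-t\ge y_2+t$ on $[0,\delta]$ the sign condition yields $\partial_1\phi\ge\partial_2\phi$, so $h$ is nonincreasing and $\phi(\boldsymbol{x})=h(\delta)\le h(0)=\phi(\boldsymbol{y})$. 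Telescoping this across the finitely many transfers gives $\phi(\boldsymbol{x})\le\phi(\boldsymbol{y})$, proving Schur-convexity; the Schur-concave case is identical with all inequalities reversed.

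I expect the main obstacle to be the reduction step in the sufficiency direction: justifying that an arbitrary majorization relation is realized by finitely many two-coordinate T-transforms, and then invoking symmetry to always route each transfer through a fixed pair of coordinates so that the global inequality follows by chaining the local ones. The differentiation arguments themselves are routine once continuous differentiability is assumed, although some care is needed at the endpoints where a moved pair of coordinates becomes equal.
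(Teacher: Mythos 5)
The paper does not prove this lemma: it is quoted (as Theorem 3.A.4) from Marshall, Olkin and Arnold, so there is no in-paper argument to compare against. Your proof is the standard Schur--Ostrowski argument and is correct: symmetry follows from mutual majorization of a vector and its permutations; the derivative inequality follows from one-sided differentiation of the Robin Hood perturbation $g(\varepsilon)$; and sufficiency follows from the Hardy--Littlewood--P\'olya reduction of majorization to finitely many T-transforms together with monotonicity of $\phi$ along each transfer. Two small points are worth recording. First, the sufficiency direction needs the entire segment traced by a T-transform (all points $y_1-t$, $y_2+t$ for $t\in[0,\delta]$) to remain inside the domain; this is automatic when $\mathscr{A}$ is an open \emph{interval} --- which is how Marshall et al.\ actually state the theorem --- but can fail for a general open set as written in the paper, so your argument (like the original) really establishes the interval version. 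The necessity direction is unaffected, since there only arbitrarily small perturbations are used and openness suffices. Second, in the sufficiency step you do not need to route every transfer through a fixed pair of coordinates: the sign condition is assumed for all $i\neq j$, so it can be applied directly to whichever pair the T-transform acts on; invoking symmetry there is harmless but superfluous. Your worry about the endpoint where the two coordinates meet is also easily dispatched, since $h'\le 0$ on the open interval plus continuity of $h$ already gives $h(\delta)\le h(0)$.
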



\begin{lemma}[Marshall et al.\cite{met}, Theorem 3.A.7]\label{mkl}
Let $\phi$ be a continuous real valued function on the set $\mathscr{D}=\{\xb:x_1\geq x_2\geq \ldots\geq x_n\}$ and continuously differentiable on the interior of $\mathscr{D}$. Denote the partial derivative of $\phi$ with respect to $i$th argument by $\phi_{(i)}(\boldsymbol{z})=\partial \phi(\boldsymbol{z})/\partial z_i$. Then,
\begin{equation*}
\phi(\xb)\leq \phi(\boldsymbol{y})\quad \text{whenever}~~  \boldsymbol{x}\preceq_{\rm w}\boldsymbol{y}~~  \text{on}~~ \mathscr{D}
\end{equation*}
if and only if
\begin{equation*}
\phi_{(1)}(\boldsymbol{z})\geq \phi_{(2)}(\boldsymbol{z})\geq \ldots\geq \phi_{(n)}(\boldsymbol{z})\geq 0,
\end{equation*}
i.e. the gradient $\bigtriangledown \phi(\boldsymbol{z})\in \mathscr{D}_{+}=\{\xb:x_1\geq x_2\geq \ldots\geq x_n\geq 0\}$, for all $\boldsymbol{z}$ in the interior of $\mathscr{D}$. Similarly,
\begin{equation*}
\phi(\xb)\leq \phi(\boldsymbol{y})\quad \text{whenever}~~  \boldsymbol{x}\mathop \preceq \limits^{{\mathop{\rm w}} }\boldsymbol{y}~~  \text{on}~~ \mathscr{D}
\end{equation*}
if and only if
\begin{equation*}
0\geq \phi_{(1)}(\boldsymbol{z})\geq \phi_{(2)}(\boldsymbol{z})\geq \ldots\geq \phi_{(n)}(\boldsymbol{z}),
\end{equation*}
i.e. the gradient $\bigtriangledown \phi(\boldsymbol{z})\in \mathscr{D}_{-}=\{\xb:0\geq x_1\geq x_2\geq \ldots\geq x_n\}$, for all $\boldsymbol{z}$ in the interior of $\mathscr{D}$.
\end{lemma}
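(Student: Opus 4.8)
The plan is to establish both equivalences by the same scheme: translate the weak-majorization relations on $\mathscr{D}$ into statements about partial sums, and then use summation by parts. Since on $\mathscr{D}$ the coordinates are already arranged in decreasing order, the relation $\xb\preceq_{\rm w}\boldsymbol{y}$ is equivalent to $D_k:=\sum_{i=1}^{k}(y_i-x_i)\ge 0$ for every $k=1,\ldots,n$, whereas $\xb\mathop \preceq \limits^{{\mathop{\rm w}} }\boldsymbol{y}$ is equivalent to $E_k:=\sum_{i=k}^{n}(y_i-x_i)\le 0$ for every $k$. I would prove the submajorization statement in detail and then obtain the supermajorization statement by the symmetric argument run from the smallest index.

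For sufficiency in the first equivalence, assume $\nabla\phi(\boldsymbol{z})\in\mathscr{D}_+$ on the interior and let $\xb\preceq_{\rm w}\boldsymbol{y}$. Because $\mathscr{D}$ is convex (it is cut out by the linear inequalities $z_i\ge z_{i+1}$), its interior is convex, so for interior endpoints the whole segment $\gamma(t)=(1-t)\xb+t\boldsymbol{y}$ stays in the interior and
\begin{equation*}
\phi(\boldsymbol{y})-\phi(\xb)=\int_0^1 \sum_{i=1}^{n}\phi_{(i)}(\gamma(t))\,(y_i-x_i)\,dt.
\end{equation*}
Writing $a_i=\phi_{(i)}(\gamma(t))$ and applying Abel summation,
\begin{equation*}
\sum_{i=1}^n a_i(y_i-x_i)=\sum_{k=1}^{n-1}(a_k-a_{k+1})\,D_k+a_n\,D_n\ge 0,
\end{equation*}
since $a_k-a_{k+1}\ge 0$, $a_n\ge 0$ and $D_k\ge 0$; hence $\phi(\xb)\le\phi(\boldsymbol{y})$. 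Endpoints on the boundary of $\mathscr{D}$ are handled by approximating them with interior points that preserve the relation $\preceq_{\rm w}$ and invoking the continuity of $\phi$.

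For necessity, fix $\boldsymbol{z}$ in the interior of $\mathscr{D}$ and probe the monotonicity property with small admissible perturbations. The choice $\xb=\boldsymbol{z}$, $\boldsymbol{y}=\boldsymbol{z}+\varepsilon\,\boldsymbol{e}_n$ keeps $\boldsymbol{y}\in\mathscr{D}$ for small $\varepsilon>0$ and satisfies $D_k\ge 0$ for all $k$, so that $\phi(\boldsymbol{z})\le\phi(\boldsymbol{z}+\varepsilon\boldsymbol{e}_n)$; dividing by $\varepsilon$ and letting $\varepsilon\downarrow 0$ gives $\phi_{(n)}(\boldsymbol{z})\ge 0$. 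The ``Robin Hood'' perturbation $\boldsymbol{y}=\boldsymbol{z}+\varepsilon(\boldsymbol{e}_k-\boldsymbol{e}_{k+1})$ likewise lies in $\mathscr{D}$ for small $\varepsilon$ and increases only the $k$-th partial sum, so $D_j\ge 0$ for all $j$ and therefore $\phi_{(k)}(\boldsymbol{z})\ge\phi_{(k+1)}(\boldsymbol{z})$. Together these inequalities give $\nabla\phi(\boldsymbol{z})\in\mathscr{D}_+$.

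The supermajorization equivalence follows along the same lines with the roles reversed: summation by parts carried out from the top index yields $\sum_{i=1}^n a_i(y_i-x_i)=a_1 E_1+\sum_{k=2}^{n}(a_k-a_{k-1})\,E_k$, which is nonnegative once $0\ge a_1\ge\cdots\ge a_n$ and $E_k\le 0$, while necessity comes from the perturbations $\boldsymbol{z}-\varepsilon\boldsymbol{e}_1$ and $\boldsymbol{z}+\varepsilon(\boldsymbol{e}_k-\boldsymbol{e}_{k+1})$. I expect the main obstacle to be the boundary behaviour: as $\phi$ is only continuously differentiable on the interior of $\mathscr{D}$, both the integral representation and the perturbation steps must be justified by approximation with interior points and the continuity of $\phi$ on all of $\mathscr{D}$, which is exactly where the hypotheses are used most carefully.
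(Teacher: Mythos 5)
The paper does not prove this lemma at all: it is imported verbatim from Marshall, Olkin and Arnold (Theorem 3.A.7) as a known tool, so there is no in-paper argument to compare against. Your proof is correct and is essentially the textbook argument: the translation of $\preceq_{\rm w}$ and $\mathop \preceq \limits^{{\mathop{\rm w}}}$ into the sign conditions on $D_k$ and $E_k$ is right for vectors already in $\mathscr{D}$, the line-integral plus Abel-summation computation gives sufficiency (convexity of $\mathscr{D}$ guarantees the segment stays in the interior when the endpoints do), and the perturbations $\boldsymbol{z}+\varepsilon\boldsymbol{e}_n$, $\boldsymbol{z}-\varepsilon\boldsymbol{e}_1$ and $\boldsymbol{z}+\varepsilon(\boldsymbol{e}_k-\boldsymbol{e}_{k+1})$ are exactly the admissible probes needed for necessity. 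The one step you leave implicit, the boundary case in the sufficiency direction, does work as you suggest: shifting both endpoints by $\varepsilon(n,n-1,\ldots,1)$ leaves all partial-sum differences unchanged, hence preserves both weak majorization relations, pushes the points into the interior of $\mathscr{D}$, and continuity of $\phi$ on $\mathscr{D}$ finishes the limit. It would be worth writing that perturbation out explicitly, since it is the only place where the hypothesis that $\phi$ is merely continuous (not differentiable) up to the boundary is actually exploited; otherwise the argument is complete.
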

One of the needed concepts in this paper is Archimedean copula. The class of Archimedean copula having a wide range of dependence structures including the independent copula. In the following, we state some useful definitions and lemmas related to copulas.

\begin{definition}
A copula $C$ is called Archimedean if is of the form $C(v_1,\ldots,v_n)=\phi^{-1}\left(\sum \limits_{i=1}^n \phi(v_i)\right)$, for $(v_1,\ldots,v_n)\in [0,1]^n$, which $\phi:[0,1]\rightarrow[0,\infty]$ is a strictly decreasing function, $\phi(0)=\infty$, $\phi(1)=0$ and $(-1)^k \frac{{\rm d}^k \phi (x)}{{\rm d}x^k}\geq 0 $, for $k\geq 0$ such that $\phi^{-1}$ is the inverse of $\phi$.
 The function $\phi$ is called generator of the copula.
\end{definition}

\begin{definition}
A two dimentional copula $C$ is positively quadrant dependent (PQD) if for all $(v_1,v_2)\in [0,1]^2$, we have $C(v_1,v_2)\geq v_1v_2$.
\end{definition}

\begin{definition}\label{def1}
Let $C_1$ and $C_2$ be two copulas. $C_1$ is less positively lower orthant dependent (PLOD) than $C_2$, denoted by $C_1\prec C_2$, if for all $\vb \in [0,1]^n$, $C_1(\vb)\leq C_2(\vb)$.
\end{definition}

We state the following lemmas from Durante\cite{dur} and Dolati and Dehghan Nezhad\cite{dd} related to Schur-concavity of copulas. 

\begin{lemma}\label{l6}
Let $C$ be a continuously differentiable copula. $C$ is Schur-concave on $[0,1]^n$, if and only if,
\begin{itemize}
\item[{\rm (i)}] $C$ is symmetric;
\item[{\rm (ii)}] $\frac{\partial C(\vb)}{\partial v_1}\geq \frac{\partial C(\vb)}{\partial v_2}$ on the set $\{\vb \in [0,1]^n: v_1\leq\ldots \leq v_n\}$.
\end{itemize}
\end{lemma}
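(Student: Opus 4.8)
The plan is to deduce the characterization from the differential criterion for Schur-concavity recorded in Lemma \ref{3a4}, applied to $\phi=C$ on the open cube $(0,1)^n$, together with the permutation symmetry of $C$ to pass between the full family of pairwise inequalities and the single comparison in (ii), and a final continuity argument to cover the boundary of $[0,1]^n$.

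First I would treat the direction ``Schur-concave $\Rightarrow$ (i) and (ii)''. Since a Schur-concave function is automatically symmetric (a vector and each of its permutations majorize one another, forcing equality of $\phi$-values), (i) is immediate. For (ii) I would invoke Lemma \ref{3a4}: continuous differentiability together with Schur-concavity gives $(v_i-v_j)\bigl(\partial C/\partial v_i-\partial C/\partial v_j\bigr)\leq 0$ for all $i\neq j$ and all $\vb\in(0,1)^n$. Taking $i=1$, $j=2$ and restricting to the cone $\{\vb: v_1\leq\cdots\leq v_n\}$, where $v_1-v_2\leq 0$, this inequality forces $\partial C/\partial v_1\geq \partial C/\partial v_2$, which is exactly (ii).

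The substantive direction is ``(i) and (ii) $\Rightarrow$ Schur-concave'', where the aim is to recover the full set of pairwise inequalities of Lemma \ref{3a4} from the single inequality (ii). The device is symmetry: if $C(\vb)=C(v_{\pi(1)},\ldots,v_{\pi(n)})$ for every permutation $\pi$, then differentiating this identity yields $(\partial C/\partial v_k)(\vb)=(\partial C/\partial v_{\pi^{-1}(k)})(v_{\pi(1)},\ldots,v_{\pi(n)})$, so permuting the arguments permutes the partials in the same way. Consequently, to check $(v_i-v_j)(\partial C/\partial v_i-\partial C/\partial v_j)\leq 0$ at an arbitrary point for an arbitrary pair, I would sort the coordinates by a suitable $\pi$ and transfer the inequality to a point of the ordered cone; there the requirement becomes the monotonicity $\partial C/\partial v_1\geq\partial C/\partial v_2\geq\cdots\geq\partial C/\partial v_n$ of the gradient along the cone. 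Condition (ii) supplies the leading comparison, and the remaining adjacent comparisons are obtained by applying the same symmetry reduction to the relevant transpositions. Feeding the resulting pairwise inequalities back into Lemma \ref{3a4} then gives Schur-concavity on $(0,1)^n$.

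Finally I would remove the openness restriction: since $C$ is continuous on all of $[0,1]^n$ and any majorization-comparable pair in $[0,1]^n$ can be approached by comparable pairs in the interior, the inequality $C(\xb)\geq C(\boldsymbol{y})$ extends from $(0,1)^n$ to $[0,1]^n$ by a limiting argument. The main obstacle is the second direction, namely organizing the symmetry relabeling so that the single boundary comparison in (ii), read off only on the fully ordered cone, reproduces every pairwise inequality demanded by Lemma \ref{3a4}; the boundary extension and the elementary ``Schur-concave implies symmetric'' fact are routine by comparison.
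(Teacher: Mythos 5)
The paper itself offers no proof of this lemma --- it is imported from Durante \cite{dur} and Dolati and Dehghan Nezhad \cite{dd} --- so there is no in-paper argument to compare yours against; I can only assess your proof on its own terms. Your forward direction is correct: Schur-concavity forces symmetry, and Lemma \ref{3a4} applied to the pair $(1,2)$ on the ordered cone, where $v_1-v_2\leq 0$, yields (ii) on the interior of the cube, with the boundary recovered by continuity. The same is true of your converse in the case $n=2$, where (ii) together with symmetry is literally the Schur--Ostrowski condition for the only available pair.

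The converse for $n\geq 3$, however, has a genuine gap at exactly the step you flag as the main obstacle. The identity $\partial C(\vb)/\partial v_k=\partial C(v_{\pi(1)},\ldots,v_{\pi(n)})/\partial v_{\pi^{-1}(k)}$ converts hypothesis (ii) into the statement that, at an arbitrary point $\vb$, $\partial C/\partial v_i\geq \partial C/\partial v_j$ whenever $v_i$ is the smallest and $v_j$ the second smallest coordinate of $\vb$; it does not produce the comparisons between the other pairs of partials that Lemma \ref{3a4} demands. Concretely, for $n=3$ and $v_1\leq v_2\leq v_3$, to deduce $\partial C/\partial v_2\geq \partial C/\partial v_3$ at $(v_1,v_2,v_3)$ by your relabeling you would have to invoke (ii) at the point $(v_2,v_3,v_1)$, which lies in the ordered cone only when $v_1=v_2=v_3$; so ``applying the same symmetry reduction to the relevant transpositions'' never leaves the diagonal. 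What your argument actually needs on the ordered cone is the full monotone gradient chain $\partial C/\partial v_1\geq \partial C/\partial v_2\geq\cdots\geq \partial C/\partial v_n$ (compare the hypothesis of Lemma \ref{mkl}), of which (ii) supplies only the first link. Either the remaining links must be extracted from the copula structure itself --- which is where the cited sources do the real work --- or the sufficiency direction as you have written it does not close.
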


\begin{lemma}\label{l7}
Every Archimedean copula is Schur-concave.
\end{lemma}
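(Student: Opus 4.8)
The plan is to verify the two conditions of Lemma \ref{l6}. An Archimedean copula has the form $C(\vb)=\phi^{-1}\left(\sum_{i=1}^n \phi(v_i)\right)$, and the complete-monotonicity assumption on $\phi$ (together with strict monotonicity) guarantees that $C$ is continuously differentiable on the interior of $[0,1]^n$, so Lemma \ref{l6} applies. Condition (i), symmetry, is immediate: the inner sum $\sum_{i=1}^n \phi(v_i)$ is invariant under any permutation of the coordinates, hence so is $C$.

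For condition (ii) I would compute the first-order partial derivatives. Writing $S=\sum_{i=1}^n \phi(v_i)$ and using $(\phi^{-1})'(S)=1/\phi'(\phi^{-1}(S))$, the chain rule yields
\begin{equation*}
\frac{\partial C(\vb)}{\partial v_i}=\frac{\phi'(v_i)}{\phi'\big(\phi^{-1}(S)\big)}.
\end{equation*}
Since $\phi$ is strictly decreasing, $\phi'<0$ everywhere, so the common denominator $\phi'(\phi^{-1}(S))$ is strictly negative. Consequently the inequality $\partial C/\partial v_1\geq \partial C/\partial v_2$ required on $\{\vb:v_1\leq\ldots\leq v_n\}$ is equivalent, after multiplying through by this negative denominator (which reverses the direction of the inequality), to $\phi'(v_1)\leq \phi'(v_2)$; that is, to $\phi'$ being nondecreasing on $[0,1]$.

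Finally I would invoke the defining property of an Archimedean generator directly: taking $k=2$ in the condition $(-1)^k \frac{{\rm d}^k\phi(x)}{{\rm d}x^k}\geq 0$ gives $\phi''(x)\geq 0$, so $\phi'$ is indeed nondecreasing. Hence $v_1\leq v_2$ implies $\phi'(v_1)\leq \phi'(v_2)$, which establishes condition (ii), and Lemma \ref{l6} then delivers the Schur-concavity of $C$.

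The argument is short, and I do not anticipate a serious obstacle; the one place demanding care is the sign bookkeeping when dividing by $\phi'(\phi^{-1}(S))$, since overlooking that this quantity is negative would flip the final conclusion. Once the signs are tracked correctly, the statement reduces cleanly to the convexity $\phi''\geq 0$ already built into the definition of the generator.
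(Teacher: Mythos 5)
Your argument is correct. Note that the paper itself does not prove this lemma at all --- it simply imports it from Durante \cite{dur} and Dolati and Dehghan Nezhad \cite{dd} --- so you are supplying a proof where the paper offers only a citation; the route you take (verifying the two conditions of Lemma \ref{l6}) is the standard one and is essentially how the cited sources argue. The computation $\frac{\partial C(\vb)}{\partial v_i}=\phi'(v_i)/\phi'\big(\phi^{-1}(S)\big)$ is right, the sign bookkeeping is handled correctly (both numerator and denominator are negative, and dividing by the negative denominator reverses the inequality), and the reduction to $\phi''\geq 0$, which is the $k=2$ instance of the generator condition, closes the argument. The only point worth flagging is a boundary technicality: since $\phi(0)=\infty$, the copula need not be continuously differentiable up to the faces $v_i=0$ of $[0,1]^n$, so strictly speaking you verify the hypotheses of Lemma \ref{l6} only on the open cube and then extend Schur-concavity to the closed cube by continuity of $C$; this is routine but deserves a sentence.
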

An important copula in application, is the Farlie-Gumbel-Morgenstern (FGM) copula which introduced by Morgenstern\cite{morgen} with a trace back to Eyraud\cite{eyr} and discussed by Gumbel\cite{gum} and Farlie\cite{far}, of the form $C_{\theta}(\vb)=\prod \limits_{i=1}^n v_i+\theta \prod \limits_{i=1}^n v_i (1-v_i)$, where $\theta\in [-1,1]$.

\begin{lemma}\label{l8}
The FGM copula is Schur-concave for any $\theta \in [-1,1]$.
\end{lemma}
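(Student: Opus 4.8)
The plan is to verify the two conditions of Lemma \ref{l6}. Condition (i) is immediate: both $\prod_{i=1}^n v_i$ and $\prod_{i=1}^n v_i(1-v_i)$ are symmetric in their arguments, so $C_{\theta}$ is symmetric on $[0,1]^n$. It therefore remains to establish condition (ii), namely $\partial C_{\theta}/\partial v_1\ge \partial C_{\theta}/\partial v_2$ on the set $\{\vb\in[0,1]^n: v_1\le\cdots\le v_n\}$.

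To this end I would compute the two partial derivatives directly. Using $\partial_{v_1}\prod_i v_i=\prod_{i\ge 2}v_i$ and $\partial_{v_1}\prod_i v_i(1-v_i)=(1-2v_1)\prod_{i\ge 2}v_i(1-v_i)$ (and the analogous formulas for $v_2$), the difference of the two derivatives factors, after collecting terms, as
\[
\frac{\partial C_{\theta}(\vb)}{\partial v_1}-\frac{\partial C_{\theta}(\vb)}{\partial v_2}
=(v_2-v_1)\Big(\prod_{i=3}^n v_i\Big)\left[1+\theta\Big(\prod_{i=3}^n(1-v_i)\Big)\big((1-v_1)(1-v_2)+v_1v_2\big)\right].
\]
The algebraic heart of this is the identity $(1-2v_1)v_2(1-v_2)-(1-2v_2)v_1(1-v_1)=(v_2-v_1)\big[(1-v_1)(1-v_2)+v_1v_2\big]$, which I would verify by a short expansion. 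On the set in question $v_2-v_1\ge 0$ and $\prod_{i=3}^n v_i\ge 0$, so the sign of the whole difference is governed entirely by the bracketed factor.

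The main obstacle is that $\theta$ may be negative, so the bracket is not manifestly nonnegative and the $\theta$-term must be bounded. Here I would invoke the two elementary estimates $0\le\prod_{i=3}^n(1-v_i)\le 1$ and $0\le(1-v_1)(1-v_2)+v_1v_2\le 1$, the latter being equivalent to $2v_1v_2\le v_1+v_2$ for $v_1,v_2\in[0,1]$. Their product therefore lies in $[0,1]$, whence $\big|\theta\big(\prod_{i=3}^n(1-v_i)\big)\big((1-v_1)(1-v_2)+v_1v_2\big)\big|\le|\theta|\le 1$ for every $\theta\in[-1,1]$. Consequently the bracket is at least $1-1=0$, the difference above is nonnegative, condition (ii) holds, and Schur-concavity follows from Lemma \ref{l6}. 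The degenerate case $n=2$ is covered by reading the empty products as $1$.
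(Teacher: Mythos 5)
Your proof is correct. Note, however, that the paper itself offers no proof of this lemma at all: Lemmas \ref{l6}--\ref{l8} are simply quoted from Durante \cite{dur} and Dolati and Dehghan Nezhad \cite{dd}, so there is no in-paper argument to compare against. What you supply is a complete, self-contained verification via the characterization in Lemma \ref{l6}: symmetry is immediate, and your factorization
\[
\frac{\partial C_{\theta}(\vb)}{\partial v_1}-\frac{\partial C_{\theta}(\vb)}{\partial v_2}
=(v_2-v_1)\Big(\prod_{i=3}^n v_i\Big)\Big[1+\theta\Big(\prod_{i=3}^n(1-v_i)\Big)\big((1-v_1)(1-v_2)+v_1v_2\big)\Big]
\]
checks out (the key identity $(1-2v_1)v_2(1-v_2)-(1-2v_2)v_1(1-v_1)=(v_2-v_1)\big[1-v_1-v_2+2v_1v_2\big]$ is correct, and $1-v_1-v_2+2v_1v_2=(1-v_1)(1-v_2)+v_1v_2$). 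The bound $(1-v_1)(1-v_2)+v_1v_2\in[0,1]$, equivalent to $2v_1v_2\le v_1+v_2$, is exactly what is needed to handle negative $\theta$, and it correctly yields nonnegativity of the bracket for all $\theta\in[-1,1]$. This makes the paper's citation-only treatment fully explicit, at the cost of a page of algebra; the one small thing worth stating is that $C_\theta$ with $\theta\in[-1,1]$ is indeed a continuously differentiable copula, which is the standing hypothesis of Lemma \ref{l6}, but that is standard for the FGM family.
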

For a comprehensive discussion in the topic of copula and the different types of dependency, one may refer to Nelsen\cite{nel}.

Also, we define a required space as below:
\[
S = \bigg\lbrace (\boldsymbol{x},\boldsymbol{y}) = \begin{bmatrix}
x_{1}\quad x_{2} \\
y_{1}\quad y_{2}
\end{bmatrix} : ( x_{i}-x_{j})(y_{i}-y_{j})\leq0, \quad i,j = 1,2\bigg\}.\]

\section{Main results}\label{sec3}
In this section, we compare the largest
claim amounts from two interdependent heterogeneous portfolios of risks in the sense
of the usual stochastic ordering. Also, we present some examples to illustrate the validity of the results.

The following theorem provides a comparison between the largest
claim amounts in two heterogeneous portfolio of risks, in terms of $\pb$.
\begin{theorem}\label{t1}
Let $ X_{\la_1}$ and $X_{\la_2} $ be non-negative random variables with
$ X_{\la_i} \thicksim \Fb(x;\la_i)$, $i = 1, 2 $, and associated copula $C$. Further, suppose that $I_{p_1}, 
I_{p_2}$  ($I_{p^*_1},I_{p^*_2} $) is a set of independent Bernoulli random variables, independent of the $X_{\la_i}$'s, with ${\rm E}[I_{p_i}]=p_i$ (${\rm E}[I_{p^*_i}]=p^*_i$), $i=1,2$. Assume that the following conditions hold:
\begin{itemize}
\item[{\rm (i)}] $h:(0,1]\rightarrow I\subset\R_{++}$ is a differentiable and strictly increasing concave function, with the log-concave inverse;
\item[{\rm (ii)}] $\Fb(x;\la)$ is decreasing in $\la$ for any $x\in \R_+$;
\item[{\rm (iii)}] $C$ is PQD.
\end{itemize}
Then, for $(\lab,h(\pb))\in S$ and  $(\lab,h(\pb^*))\in S$, we have
\begin{eqnarray*}
(h(p^*_1), h(p^*_2))  \mathop \preceq \limits^{{\mathop{\rm m}} } (h(p_1), h(p_2)) \Longrightarrow Y^*_{2:2} \leq_{{\rm st}}Y_{2:2}.
\end{eqnarray*}
\end{theorem}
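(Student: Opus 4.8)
The plan is to reduce the stochastic comparison to a pointwise inequality between the two distribution functions, rewrite that inequality as a statement about a function of $\boldsymbol{q}=(h(p_1),h(p_2))$, and then establish monotonicity along the one-parameter family of vectors connecting $\boldsymbol{q}^*$ to $\boldsymbol{q}$. First I would compute the distribution function of $Y_{2:2}=\max(Y_1,Y_2)$. Conditioning on the independent pair $(I_{p_1},I_{p_2})$ and using that the Bernoulli variables are independent of $(X_{\lambda_1},X_{\lambda_2})$, for $x\ge 0$ one gets
\[
P(Y_{2:2}\le x)=\bar p_1\bar p_2+p_1\bar p_2 u_1+\bar p_1 p_2 u_2+p_1p_2\,C(u_1,u_2),
\]
where $u_i=F(x;\lambda_i)$ and $\bar p_i=1-p_i$. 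Writing $\bar u_i=\bar F(x;\lambda_i)$ and collecting terms, this simplifies to
\[
\psi(p_1,p_2):=P(Y_{2:2}\le x)=1-p_1\bar u_1-p_2\bar u_2+p_1p_2\,\bar H(x,x),
\]
with $\bar H(x,x)=1-u_1-u_2+C(u_1,u_2)=P(X_{\lambda_1}>x,X_{\lambda_2}>x)$. Since $Y^*_{2:2}\leq_{\mathrm{st}}Y_{2:2}$ is by definition equivalent to $P(Y^*_{2:2}\le x)\ge P(Y_{2:2}\le x)$ for every $x$, it suffices to prove $\psi(\boldsymbol{p}^*)\ge\psi(\boldsymbol{p})$ for each fixed $x$.

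Next I would pass to the variables $q_i=h(p_i)$, i.e.\ $p_i=g(q_i)$ with $g=h^{-1}$; by (i) the map $g$ is increasing, convex (being the inverse of an increasing concave function) and log-concave. Set $\Psi(\boldsymbol{q})=\psi(g(q_1),g(q_2))$, so the goal becomes $\boldsymbol{q}^*\mathop \preceq \limits^{{\mathop{\rm m}} }\boldsymbol{q}\Rightarrow\Psi(\boldsymbol{q}^*)\ge\Psi(\boldsymbol{q})$. Because $\bar H(x,x)$ is evaluated on the diagonal it is symmetric under interchange of the two policies, so I may relabel the indices so that $q_1\ge q_2$; the hypothesis $(\lambda,h(p))\in S$ then forces $\lambda_1\le\lambda_2$, whence $\bar u_1\ge\bar u_2$ by (ii), and likewise $q_1^*\ge q_2^*$. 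With the common sum $s=q_1+q_2=q_1^*+q_2^*$ fixed, I parametrize the segment $\boldsymbol{q}(t)=(s/2+t,\,s/2-t)$, so that $\boldsymbol{q}=\boldsymbol{q}(t_0)$ and $\boldsymbol{q}^*=\boldsymbol{q}(t_1)$ with $0\le t_1\le t_0$ (this is exactly the content of majorization for $n=2$). It then suffices to show that $\Phi(t)=\Psi(\boldsymbol{q}(t))$ is nonincreasing for $t\ge0$.

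The heart of the argument is the sign of $\Phi'(t)=\Psi_{q_1}-\Psi_{q_2}$ on the region $\{q_1\ge q_2\}$. Writing $a_i=\bar u_i$, $b_i=g(q_i)$, $c_i=g'(q_i)$, a direct differentiation yields
\[
\Psi_{q_1}-\Psi_{q_2}=-\big(c_1a_1-c_2a_2\big)+\bar H(x,x)\,\big(c_1b_2-c_2b_1\big).
\]
Here $c_1\ge c_2$ (convexity of $g$, i.e.\ concavity of $h$) together with $a_1\ge a_2$ gives $c_1a_1\ge c_2a_2$, so the first bracket is nonnegative; log-concavity of $g$ gives $g'(q_1)/g(q_1)\le g'(q_2)/g(q_2)$, that is $c_1b_2\le c_2b_1$, while (iii) yields $\bar H(x,x)\ge \bar u_1\bar u_2\ge0$, so the second term is nonpositive. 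Hence $\Phi'(t)\le0$, giving $\Psi(\boldsymbol{q}^*)=\Phi(t_1)\ge\Phi(t_0)=\Psi(\boldsymbol{q})$ and the claimed ordering. The main obstacle is precisely this sign analysis: $\Psi$ is \emph{not} symmetric in $(q_1,q_2)$, so ordinary Schur-concavity (Lemma~\ref{3a4}) is unavailable, and one must use the opposite-ordering constraint $S$ to align the sign of $c_1a_1-c_2a_2$ with the direction of majorization, while simultaneously balancing the convexity and the log-concavity of $g$ across the two competing terms.
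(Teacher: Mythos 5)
Your proof is correct, and it arrives at the same closed form for $G_{Y_{2:2}}(x)$ as the paper's display \eqref{dist1} (your $1-p_1\bar u_1-p_2\bar u_2+p_1p_2\bar H(x,x)$ is that formula rearranged, since $\Fb(x;\la_1)\Fb(x;\la_2)+C-F(x;\la_1)F(x;\la_2)=\bar H(x,x)$), and it exploits the constraint $S$ and the convexity/log-concavity of $h^{-1}$ in exactly the same way. The execution differs in two respects worth noting. First, the paper splits $-G_{Y_{2:2}}=\Psi_1+\Psi_2$ with $\Psi_1=-\prod_i(1-h^{-1}(u_i)\Fb(x;\la_i))$ and $\Psi_2=-h^{-1}(u_1)h^{-1}(u_2)[C-F_1F_2]$, and applies the weak-majorization gradient criterion (Lemma~\ref{mkl}, Marshall--Olkin 3.A.7) to each summand separately; this requires each gradient to have a definite sign pattern ($\nabla\Psi_1\in\mathscr{D}_+$, $\nabla\Psi_2\in\mathscr{D}_-$). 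You instead keep the distribution function whole and verify the single condition $\partial_{q_1}\Psi-\partial_{q_2}\Psi\le 0$ on the chamber $q_1\ge q_2$, proving the majorization monotonicity by hand via the segment $\boldsymbol{q}(t)=(s/2+t,s/2-t)$ --- an elementary, self-contained reproof of the $n=2$ Schur--Ostrowski criterion that suffices here because the hypothesis is full (not merely weak) majorization. Second, and more interestingly, your grouping makes the copula enter only through the coefficient $\bar H(x,x)=P(X_{\la_1}>x,X_{\la_2}>x)$, which is nonnegative for \emph{any} copula (it is the $C$-volume of $[u_1,1]\times[u_2,1]$); so although you invoke PQD to write $\bar H\ge \bar u_1\bar u_2$, all you actually use is $\bar H\ge 0$, and condition (iii) plays no essential role in your argument. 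The paper, by contrast, genuinely needs PQD to force $\Psi_2\le 0$ so that $\nabla\Psi_2\in\mathscr{D}_-$ and Lemma~\ref{mkl} applies to that piece. Your route is therefore marginally more general as well as more elementary; the paper's decomposition buys a statement applicable under weak submajorization of one piece and weak supermajorization of the other, which is not needed for the theorem as stated.
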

\begin{proof}
Without loss of generality, we suppose that $\la_1 \leq \la_2$. For $(\lab,h(\pb))\in S$ and  $(\lab,h(\pb^*))\in S$, we have $h(p_1)\geq h(p_2)$ and $h(p^*_1)\geq h(p^*_2)$. Let $h^{-1}$ be the inverse of the function $h$, $u_i=h(p_i)$ and $u^*_i=h(p^*_i)$, for $i=1,2$. It can be easily verified that the distribution function of $Y_{2:2}$ is given by
\begin{eqnarray}\label{dist1}
G_{Y_{2:2}}(x)&=&\prod \limits_{i=1}^2 \bigg(1-h^{-1}(u_i) \Fb(x;\la_i)\bigg)\nonumber\\
&&+h^{-1}(u_1) h^{-1}(u_2) \bigg[C\big(F(x;\la_1),F(x;\la_2)\big)-F(x;\la_1) F(x;\la_2)\bigg].
\end{eqnarray}
Let
\begin{equation*}\label{psi}
G_{Y_{2:2}}(x)=-\Psi_1(\ub)-\Psi_2(\ub),
\end{equation*}
where
\begin{equation*}
\Psi_1(\ub)=-\prod \limits_{i=1}^2 \bigg(1-h^{-1}(u_i) \Fb(x;\la_i)\bigg),
\end{equation*}
and
\begin{equation*}
\Psi_2(\ub)=-h^{-1}(u_1) h^{-1}(u_2) \bigg[C\big(F(x;\la_1),F(x;\la_2)\big)-F(x;\la_1) F(x;\la_2)\bigg].
\end{equation*}
The partial derivative of $\Psi_1(\ub)$ with respect to $u_{i} $ is given by
\begin{equation*}
\frac{\partial \Psi_1(\ub)}{\partial u_i}=-\frac{\Fb(x;\la_i) \frac{{\rm d} h^{-1}(u_i)}{{\rm d} u_i}}{1-h^{-1}(u_i)\Fb(x;\la_i)} \Psi_1(\ub)\geq 0.
\end{equation*}
Since $\Fb(x;\la)$ is decreasing in $\la$, by using the increasing and convexity properties of $h^{-1}(x)$ in $x\in \R_{+}$, for $\la_1 \leq \la_2$ and $u_1\geq u_2$, we have
\begin{equation}\label{eq1}
0\leq 1-h^{-1}(u_1)\Fb(x;\la_1)\leq 1-h^{-1}(u_2)\Fb(x;\la_2),
\end{equation}
and
\begin{equation}\label{eq2}
\Fb(x;\la_1)\frac{{\rm d} h^{-1}(u_1)}{{\rm d} u_1}\geq \Fb(x;\la_2)\frac{{\rm d} h^{-1}(u_2)}{{\rm d} u_2} \geq 0.
\end{equation}
Using \eqref{eq1} and \eqref{eq2}, we obtain
\begin{equation*}
\frac{\partial \Psi_1(\ub)}{\partial u_1}-\frac{\partial \Psi_1(\ub)}{\partial u_2}=-\bigg[\frac{\Fb(x;\la_1)\frac{{\rm d} h^{-1}(u_1)}{{\rm d} u_1}}{1-h^{-1}(u_1)\Fb(x;\la_1)}-\frac{\Fb(x;\la_2)\frac{{\rm d} h^{-1}(u_2)}{{\rm d} u_2}}{1-h^{-1}(u_2)\Fb(x;\la_2)}\bigg]\Psi_1(\ub)\geq 0.
\end{equation*}
Applying the Lemma \ref{mkl} and the assumption $(u^*_1,u^*_2)\mathop \preceq \limits^{{\mathop{\rm m}} }(u_1,u_2)$, imply that
\begin{equation}\label{eq10}
\Psi_1(\ub^*)\leq \Psi_1(\ub).
\end{equation}

Now, the partial derivative of $\Psi_2(\ub)$ with respect to $u_{i} $ is given by
\begin{equation*}
\frac{\partial \Psi_2(\ub)}{\partial u_i}=\frac{\frac{{\rm d}h^{-1}(u_i)}{{\rm d}u_i}}{h^{-1}(u_i)}\Psi_2(\ub)=\frac{{\rm d} \log h^{-1}(u_i)}{{\rm d} u_i}\Psi_2(\ub)\leq 0.
\end{equation*}
 Therefore, for $u_1\geq u_2$, we obtain
\begin{equation*}
\frac{\partial \Psi_2(\ub)}{\partial u_1}-\frac{\partial \Psi_2(\ub)}{\partial u_2}=\bigg[\frac{{\rm d} \log h^{-1}(u_1)}{{\rm d} u_1}-\frac{{\rm d} \log h^{-1}(u_2)}{{\rm d} u_2}\bigg]\Psi_2(\ub)\geq 0,
\end{equation*}
where the inequality follows from log-concavity of $h^{-1}$ and negativity of $\Psi_2(\ub)$ which is due to PQD property of $C$. Thus, applying Lemma \ref{mkl} and the assumption $(u^*_1,u^*_2)\mathop \preceq \limits^{{\mathop{\rm m}} }(u_1,u_2)$, imply that
\begin{equation}\label{eq11}
\Psi_2(\ub^*)\leq \Psi_2(\ub).
\end{equation}
By using \eqref{eq10} and \eqref{eq11}, the proof is completed.
\end{proof}

The following theorem provides a comparison between the largest
claim amounts in two heterogeneous portfolio of risks, in terms of $\lab$.
\begin{theorem}\label{t2}
Let $ X_{\la_1}$ and $X_{\la_2} $ ($ X_{\la^{*}_1}$ and  $X_{\la^{*}_2} $) be non-negative random variables with
$ X_{\la_i} \thicksim \Fb(x;\la_i)$ ($ X_{\la^{*}_i} \thicksim \Fb(x;\la^*_{i} )$), $i = 1, 2 $, and associated copula $C$. Further, suppose that $I_{p_1}, I_{p_2}$ is a set of independent Bernoulli random variables, independent of the $X_{\la_i}$'s
($X_{\la^*_i}$'s), with ${\rm E}[I_{p_i}]=p_i$, $i=1,2$. Assume that the following conditions hold:
\begin{itemize}
\item[{\rm (i)}] $h:[0,1]\rightarrow I\subset\R_{+}$ is a differentiable and strictly increasing function;
\item[{\rm (ii)}] $\Fb(x;\la)$ is decreasing and convex in $\la$ for any $x\in \R_+$;
\item[{\rm (iii)}] $\frac{\partial C(v_1,v_2)}{\partial v_1}\geq \frac{\partial C(v_1,v_2)}{\partial v_2}$, for all $0\leq v_1\leq v_2\leq 1$.
\end{itemize}
Then, for $(\lab,h(\pb))\in S$ and  $(\lab^*,h(\pb))\in S$, we have
\begin{eqnarray*}
(\la^*_1, \la^*_2)  \mathop \preceq \limits^{{\mathop{\rm w}} } (\la_1, \la_2) \Longrightarrow Y^*_{2:2} \leq_{{\rm st}}Y_{2:2}.
\end{eqnarray*}
\end{theorem}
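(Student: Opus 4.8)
The plan is to follow the architecture of the proof of Theorem~\ref{t1}, but to freeze $\pb$ and $x$ and regard $G_{Y_{2:2}}(x)$ as a function of $\lab$, with the convexity of $\Fb(x;\cdot)$ in $\la$ playing the role that the concavity of $h$ played there. First I would normalise the labelling. Since $h$ is strictly increasing, the hypotheses $(\lab,h(\pb))\in S$ and $(\lab^*,h(\pb))\in S$ are equivalent to saying that $\lab$ and $\pb$, and also $\lab^*$ and $\pb$, are oppositely ordered. Assuming without loss of generality that $\la_1\geq\la_2$, this yields $p_1\leq p_2$ and $\la^*_1\geq\la^*_2$, so that both $\lab$ and $\lab^*$ lie in $\mathscr{D}=\{\boldsymbol z:z_1\geq z_2\}$ and the majorization comparison takes place inside $\mathscr{D}$. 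Moreover, as $\Fb(x;\la)$ is decreasing in $\la$, on $\mathscr{D}$ we have $\Fb(x;\la_1)\leq\Fb(x;\la_2)$, equivalently $F(x;\la_1)\geq F(x;\la_2)$ for the copula arguments.

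Writing $\phi(\lab)=-G_{Y_{2:2}}(x)$ for fixed $x$, the desired conclusion $Y^*_{2:2}\leq_{\rm st}Y_{2:2}$ is equivalent to $G_{Y^*_{2:2}}(x)\geq G_{Y_{2:2}}(x)$, i.e.\ to $\phi(\lab^*)\leq\phi(\lab)$; since the hypothesis is $(\la^*_1,\la^*_2)\mathop \preceq \limits^{{\mathop{\rm w}} }(\la_1,\la_2)$, this is exactly the conclusion of the supermajorization part of Lemma~\ref{mkl}, provided $\bigtriangledown\phi$ lies in $\mathscr{D}_-$ on the interior of $\mathscr{D}$. Equivalently, conditioning on $(I_{p_1},I_{p_2})$ rewrites \eqref{dist1} as
\begin{equation*}
G_{Y_{2:2}}(x)=(1-p_1)(1-p_2)+p_1(1-p_2)F(x;\la_1)+(1-p_1)p_2F(x;\la_2)+p_1p_2\,C\big(F(x;\la_1),F(x;\la_2)\big),
\end{equation*}
from which, writing $\partial_i C$ for the partial derivative of $C$ in its $i$th argument evaluated at $(F(x;\la_1),F(x;\la_2))$, I obtain
\begin{equation*}
\phi_{(i)}(\lab)=-p_i\,\frac{\partial F(x;\la_i)}{\partial \la_i}\Big[(1-p_j)+p_j\,\partial_i C\Big],\qquad \{i,j\}=\{1,2\}.
\end{equation*}
Each factor is nonnegative, because $\partial F(x;\la_i)/\partial\la_i\geq0$ (as $\Fb$ is decreasing) and the copula partials lie in $[0,1]$; hence $\phi_{(1)},\phi_{(2)}\leq0$, which is the negativity half of $\bigtriangledown\phi\in\mathscr{D}_-$.

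The crux, and the step I expect to be genuinely delicate, is the ordering $\phi_{(1)}(\lab)\geq\phi_{(2)}(\lab)$, equivalently $\partial G_{Y_{2:2}}/\partial\la_1\leq\partial G_{Y_{2:2}}/\partial\la_2$. I would prove it by matching the three factors in $\phi_{(i)}$: we have $p_1\leq p_2$; the derivative factor obeys $\partial F(x;\la_1)/\partial\la_1\leq\partial F(x;\la_2)/\partial\la_2$, which is precisely where the convexity of $\Fb(x;\cdot)$ (condition (ii)) enters, since convexity makes $\partial\Fb/\partial\la$ increasing and $\la_1\geq\la_2$; and finally the bracket inequality $(1-p_2)+p_2\,\partial_1 C\leq(1-p_1)+p_1\,\partial_2 C$. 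Rewriting the last as $p_2(1-\partial_1 C)\geq p_1(1-\partial_2 C)$, it follows from $p_1\leq p_2$ together with $\partial_2 C\geq\partial_1 C$ at the point $(F(x;\la_1),F(x;\la_2))$. This is the subtle point: condition (iii) only gives $\partial_1 C\geq\partial_2 C$ on $\{v_1\leq v_2\}$, whereas our evaluation point has $F(x;\la_1)\geq F(x;\la_2)$. I would bridge the gap using the symmetry of $C$ — which holds for all the Archimedean and FGM copulas considered and is the symmetry hypothesis underlying the Schur-concavity criterion in Lemma~\ref{l6} — since symmetry converts condition (iii) into $\partial_2 C\geq\partial_1 C$ exactly on $\{v_1\geq v_2\}$. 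With the three nonnegative factors matched, the product inequality $\partial G_{Y_{2:2}}/\partial\la_1\leq\partial G_{Y_{2:2}}/\partial\la_2$ holds, so $\bigtriangledown\phi\in\mathscr{D}_-$ throughout the interior of $\mathscr{D}$, and an application of Lemma~\ref{mkl} completes the proof.
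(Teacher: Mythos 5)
Your proposal is, in substance, the paper's own proof: the same conditional decomposition of $G_{Y_{2:2}}(x)$ over the values of $(I_{p_1},I_{p_2})$, the same partial derivatives in $\la_1,\la_2$, the same matching of three nonnegative factors (the $p_i$'s, the derivatives $\partial F(x;\la_i)/\partial\la_i$ controlled by convexity of $\Fb$ in $\la$, and the copula partials), and the same appeal to the weak-supermajorization half of Lemma \ref{mkl}. The one place you diverge is the normalization, and it costs you something real: by taking $\la_1\geq\la_2$ you put the evaluation point $\big(F(x;\la_1),F(x;\la_2)\big)$ in the region $\{v_1\geq v_2\}$, where condition (iii) says nothing, and you are forced to import the symmetry of $C$ to flip the inequality. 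Symmetry is not a hypothesis of Theorem \ref{t2} --- condition (iii) is deliberately a one-sided, asymmetric condition --- so as written your argument establishes a weaker statement than the one claimed.

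The repair is simply to normalize the other way, as the paper does: take $\la_1\leq\la_2$, so that $(\lab,h(\pb))\in S$ gives $p_1\geq p_2$ and the decreasingness of $\Fb(x;\cdot)$ gives $F(x;\la_1)\leq F(x;\la_2)$. Then condition (iii) applies verbatim at the evaluation point, yielding $\partial C/\partial v_1\geq \partial C/\partial v_2\geq 0$ there, and the three factor inequalities reverse consistently ($p_1\geq p_2$, $\partial F(x;\la_1)/\partial\la_1\geq\partial F(x;\la_2)/\partial\la_2\geq 0$, and the bracket inequality), giving $0\geq\partial\Psi/\partial\la_2\geq\partial\Psi/\partial\la_1$, which is exactly the gradient condition of Lemma \ref{mkl} for the decreasing arrangement $(\la_2,\la_1)$. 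With that reorientation your argument coincides with the paper's and no symmetry assumption is needed. (To your credit, the subtlety you flagged is real: relabelling the indices silently replaces $C$ by its transpose, so for a genuinely non-symmetric $C$ the case $\la_1>\la_2$ is only covered if condition (iii) is read as referring to the copula in the order that makes $\la_1\leq\la_2$; the paper's ``without loss of generality'' glosses over exactly this point.)
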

\begin{proof}
Without loss of generality, we suppose that $\la_1 \leq \la_2$, $u_1\geq u_2$ and $u^*_1\geq u^*_2$. By some algebraic calculations in \eqref{dist1}, the distribution function of $Y_{2:2}$ can be rewritten as the following form:
\begin{eqnarray*}
G_{Y_{2:2}}(x)&=&(1-h^{-1}(u_1))(1-h^{-1}(u_2))+h^{-1}(u_1)h^{-1}(u_2)\nonumber\\
&&\times \bigg[C\big(F(x;\la_1),F(x;\la_2)\big)+\frac{1-h^{-1}(u_2)}{h^{-1}(u_2)}F(x;\la_1)+\frac{1-h^{-1}(u_1)}{h^{-1}(u_1)} F(x;\la_2)\bigg].
\end{eqnarray*}
Define $\Psi(\lab)=-G_{Y_{2:2}}(x)$. The partial derivative of $\Psi(\lab)$ with respect to $\la_{i} $, $i=1,2$ are given by
\begin{equation*}
\frac{\partial \Psi(\lab)}{\partial \la_1}=-h^{-1}(u_1)h^{-1}(u_2)\frac{{\rm d}F(x;\la_1)}{{\rm d}\la_1}\bigg[\frac{\partial C\big(F(x;\la_1),F(x;\la_2)\big)}{\partial v_1}+\frac{1-h^{-1}(u_2)}{h^{-1}(u_2)}\bigg]\leq 0,
\end{equation*}
and
\begin{equation*}
\frac{\partial \Psi(\lab)}{\partial \la_2}=-h^{-1}(u_1)h^{-1}(u_2)\frac{{\rm d}F(x;\la_2)}{{\rm d}\la_2}\bigg[\frac{\partial C\big(F(x;\la_1),F(x;\la_2)\big)}{\partial v_2}+\frac{1-h^{-1}(u_1)}{h^{-1}(u_1)}\bigg]\leq 0,
\end{equation*}
where the inequalities are due to decreasing property of $\Fb(x;\la)$ in $\la$ and positivity of $\frac{1-h^{-1}(x)}{h^{-1}(x)}$ in $x\in \R_{+}$. Since $h^{-1}$ is increasing in $x\in \R_+$ and $\Fb(x;\la)$ is decreasing and convex in $\la$ for any $x\in \R_{+}$, then for $\la_1 \leq \la_2$ and $u_1\geq u_2$, we have
\begin{equation}\label{eq5}
0\leq \frac{1-h^{-1}(u_1)}{h^{-1}(u_1)}\leq \frac{1-h^{-1}(u_2)}{h^{-1}(u_2)},
\end{equation}
and
\begin{equation}\label{eq6}
\frac{{\rm d} F(x;\la_1)}{{\rm d}\la_1}\geq \frac{{\rm d} F(x;\la_2)}{{\rm d}\la_2} \geq 0.
\end{equation}
The decreasing property of $\Fb(x;\la)$ in $\la$ and the condition (iii) imply that
\begin{equation}\label{eq7}
\frac{\partial C\big(F(x;\la_1),F(x;\la_2)\big)}{\partial v_1}\geq \frac{\partial C\big(F(x;\la_1),F(x;\la_2)\big)}{\partial v_2}\geq 0.
\end{equation}
Using \eqref{eq5}, \eqref{eq6} and \eqref{eq7}, we obtain
\begin{eqnarray*}
\frac{\partial \Psi(\lab)}{\partial \la_2}-\frac{\partial \Psi(\lab)}{\partial \la_1}&=&-h^{-1}(u_1)h^{-1}(u_2)\\
&&\times\bigg[\frac{{\rm d}F(x;\la_2)}{{\rm d}\la_2}\frac{\partial C\big(F(x;\la_1),F(x;\la_2)\big)}{\partial v_2}+\frac{{\rm d}F(x;\la_2)}{{\rm d}\la_2}\frac{1-h^{-1}(u_1)}{h^{-1}(u_1)}\\
&&~~~-\frac{{\rm d}F(x;\la_1)}{{\rm d}\la_1}\frac{\partial C\big(F(x;\la_1),F(x;\la_2)\big)}{\partial v_1}-\frac{{\rm d}F(x;\la_1)}{{\rm d}\la_1}\frac{1-h^{-1}(u_2)}{h^{-1}(u_2)}\bigg]\geq 0.
\end{eqnarray*}
Therefore, under the assumption $\lab^*\mathop \preceq \limits^{{\mathop{\rm w}} } \lab$, Lemma \ref{mkl} implies that
\begin{equation*}
\Psi(\lab^*)\leq \Psi(\lab),
\end{equation*}
which completes the proof.
\end{proof}
The following theorem provides a comparison between the largest claim amounts in two heterogeneous portfolio of risks, in terms of $\pb$ and $\lab$.
\begin{theorem}\label{t3}
Let $ X_{\la_1}$ and $X_{\la_2} $ ($ X_{\la^*_1}$ and $X_{\la^*_2} $) be non-negative random variables with
$ X_{\la_i} \thicksim \Fb(x;\la_i)$ ($ X_{\la^*_i} \thicksim \Fb(x;\la^*_i)$), $i = 1, 2 $, and associated copula C. Further, suppose that $I_{p_1}, I_{p_2}$  ($I_{p^*_1},I_{p^*_2} $) is a set of independent Bernoulli random variables, independent of the $X_{\la_i}$'s ($X_{\la^*_i}$'s), with ${\rm E}[I_{p_i}]=p_i$ (${\rm E}[I_{p^*_i}]=p^*_i$), $i=1,2$. Assume that the following conditions hold:
\begin{itemize}
\item[{\rm (i)}] $h:(0,1]\rightarrow I\subset\R_{++}$ is a differentiable and strictly increasing concave function, with a log-concave inverse;
\item[{\rm (ii)}] $\Fb(x;\la)$ is decreasing and convex in $\la$ for any $x\in \R_+$;
\item[{\rm (iii)}] $C$ is PQD and $\frac{\partial C(v_1,v_2)}{\partial v_1}\geq \frac{\partial C(v_1,v_2)}{\partial v_2}$, for all $0\leq v_1\leq v_2\leq 1$.
\end{itemize}
Then, for $(\lab,h(\pb))\in S$ and  $(\lab^*,h(\pb^*))\in S$, we have
\begin{eqnarray*}
(h(p^*_1), h(p^*_2))  \mathop \preceq \limits^{{\mathop{\rm m}} } (h(p_1), h(p_2))~~\text{and}~~(\la^*_1,\la^*_2)\mathop \preceq \limits^{{\mathop{\rm w}} } (\la_1, \la_2) \Longrightarrow Y^*_{2:2} \leq_{{\rm st}}Y_{2:2}.
\end{eqnarray*}
\end{theorem}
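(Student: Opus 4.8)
The plan is to reduce Theorem~\ref{t3} to the two preceding results by a transitivity argument, after observing that the hypotheses collected here are exactly the union of those of Theorems~\ref{t1} and~\ref{t2}. Indeed, condition~(i) is identical to condition~(i) of Theorem~\ref{t1}; condition~(ii), asking that $\Fb(x;\la)$ be decreasing and convex, is the hypothesis of Theorem~\ref{t2} and in particular implies the weaker monotonicity needed in Theorem~\ref{t1}; and condition~(iii) simultaneously imposes the PQD property required by Theorem~\ref{t1} and the partial-derivative inequality required by Theorem~\ref{t2}. Hence both earlier theorems are available, and it remains only to chain them together while changing one parameter vector at a time.

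Concretely, I would introduce the intermediate portfolio whose $i$th claim amount is $\widetilde Y_i = I_{p^*_i}X_{\la_i}$, $i=1,2$; that is, it keeps the severity parameters $\lab$ of the unstarred family but attaches the occurrence probabilities $\pb^*$ of the starred family, so that its largest claim amount $\widetilde Y_{2:2}$ has distribution function given by~\eqref{dist1} with parameters $(\lab,\pb^*)$. The first step compares $(\lab^*,\pb^*)$ with $(\lab,\pb^*)$: these share the common probability vector $\pb^*$, so Theorem~\ref{t2} applies and the hypothesis $(\la^*_1,\la^*_2)\mathop \preceq \limits^{{\mathop{\rm w}} }(\la_1,\la_2)$ yields $Y^*_{2:2}\leq_{{\rm st}}\widetilde Y_{2:2}$. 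The second step compares $(\lab,\pb^*)$ with $(\lab,\pb)$: these share the common severity vector $\lab$, so Theorem~\ref{t1} applies and the hypothesis $(h(p^*_1),h(p^*_2))\mathop \preceq \limits^{{\mathop{\rm m}} }(h(p_1),h(p_2))$ yields $\widetilde Y_{2:2}\leq_{{\rm st}}Y_{2:2}$. Transitivity of $\leq_{{\rm st}}$ then gives $Y^*_{2:2}\leq_{{\rm st}}Y_{2:2}$, as desired.

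The step I expect to require the most care is checking that the intermediate portfolio meets the membership-in-$S$ requirements of the two invoked theorems. Theorem~\ref{t2} needs $(\lab,h(\pb^*))\in S$ and $(\lab^*,h(\pb^*))\in S$, while Theorem~\ref{t1} needs $(\lab,h(\pb))\in S$ and $(\lab,h(\pb^*))\in S$; among these, $(\lab,h(\pb))\in S$ and $(\lab^*,h(\pb^*))\in S$ are hypotheses of Theorem~\ref{t3}, but the condition $(\lab,h(\pb^*))\in S$, which couples the unstarred severities with the starred probabilities, is new and must be produced. The resolution I would use rests on the fact that $Y_{2:2}$ and $Y^*_{2:2}$ are permutation-invariant and that both $\mathop \preceq \limits^{{\mathop{\rm m}} }$ and $\mathop \preceq \limits^{{\mathop{\rm w}} }$ are invariant under relabeling of indices. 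I may therefore relabel the unstarred family so that $\la_1\leq\la_2$, which by $(\lab,h(\pb))\in S$ forces $h(p_1)\geq h(p_2)$, and independently relabel the starred family so that $h(p^*_1)\geq h(p^*_2)$, which by $(\lab^*,h(\pb^*))\in S$ forces $\la^*_1\leq\la^*_2$; none of the given majorization relations is disturbed by these relabelings. With $\la_1\leq\la_2$ and $h(p^*_1)\geq h(p^*_2)$ in force one has $(\la_i-\la_j)(h(p^*_i)-h(p^*_j))\leq0$, that is $(\lab,h(\pb^*))\in S$, which is precisely the missing condition. Once this bookkeeping is settled the two applications above go through verbatim, and the proof is complete.
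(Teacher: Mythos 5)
Your proposal is correct and follows essentially the same route as the paper: a transitivity argument through an intermediate portfolio that changes one parameter vector at a time, invoking Theorems~\ref{t1} and~\ref{t2}, with the membership-in-$S$ condition for the intermediate secured by rearranging each family so that severities and (transformed) probabilities are oppositely ordered. The only cosmetic difference is that you pass through $(\lab,\pb^*)$ while the paper passes through $(\lab^*,\pb)$ (written there via order statistics of the parameters), which merely swaps the order in which the two earlier theorems are applied.
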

\begin{proof}
Let $V_{2:2}$, $Z_{2:2}$ and $W_{2:2}$ be the largest claim amounts from the portfolios $(I_{p^*_{1:2}} X_{\la^*_{2:2}},I_{p^*_{2:2}} X_{\la^*_{1:2}})$, $(I_{p_{1:2}} X_{\la^*_{2:2}},I_{p_{2:2}}X_{\la^*_{1:2}})$ and $(I_{p_{1:2}} X_{\la_{2:2}},I_{p_{2:2}}X_{\la_{1:2}})$, respectively. It can be verified that $Y^*_{2:2}\mathop = \limits^{{\mathop{\rm st}} }V_{2:2}$ and $Y_{2:2}\mathop = \limits^{{\mathop{\rm st}} }W_{2:2}$. On the other hand, Theorem \ref{t1} and Theorem \ref{t2} imply that $V_{2:2} \leq_{{\rm st}}Z_{2:2}$ and $Z_{2:2} \leq_{{\rm st}}W_{2:2}$, respectively. Hence, the required result is obtained.
\end{proof}

The scale family is an applicable model in reliability theory and actuarial science. $X_{\la}$  is said to follow the scale family, if its survival function can be expressed as $\Fb(x;\la)=\Fb(\la x)$, where $\Fb(x)$ is the baseline survival function with the corresponding density function $f(x)$ and $\la>0$.

The following theorem provides a comparison between the largest
claim amounts in two heterogeneous portfolio of risks, whenever the marginal distributions belonging to the scale family.
\begin{theorem}\label{t5}
Let $\Fb(x;\la_i)=\Fb(\la_i x)$ and $\Fb(x;\la^*_i)=\Fb(\la^*_i x)$, for $i=1,2$. Under the setup of Theorem \ref{t3}, suppose that the following conditions hold:
\begin{itemize}
\item[{\rm (i)}] $h:(0,1]\rightarrow I\subset\R_{++}$ is a differentiable and strictly increasing concave function, with a log-concave inverse;
\item[{\rm (ii)}] $f(x)$ is decreasing in $x\in \R_+$;
\item[{\rm (iii)}] $C$ is PQD and $\frac{\partial C(v_1,v_2)}{\partial v_1}\geq \frac{\partial C(v_1,v_2)}{\partial v_2}$, for all $0\leq v_1\leq v_2\leq 1$.
\end{itemize}
Then, for $(\lab,h(\pb))\in S$ and  $(\lab^*,h(\pb^*))\in S$, we have
\begin{eqnarray*}
(h(p^*_1), h(p^*_2))  \mathop \preceq \limits^{{\mathop{\rm m}} } (h(p_1), h(p_2))~~\text{and}~~(\la^*_1,\la^*_2)\mathop \preceq \limits^{{\mathop{\rm w}} } (\la_1, \la_2) \Longrightarrow Y^*_{2:2} \leq_{{\rm st}}Y_{2:2}.
\end{eqnarray*}
\end{theorem}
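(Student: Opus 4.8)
The plan is to recognize Theorem \ref{t5} as a direct specialization of Theorem \ref{t3} to the scale family. Under the parametrization $\Fb(x;\la)=\Fb(\la x)$, the only hypothesis of Theorem \ref{t3} that is not already assumed verbatim is its condition (ii), namely that $\Fb(x;\la)$ be decreasing and convex in $\la$. Conditions (i) and (iii) of Theorem \ref{t5} coincide word-for-word with conditions (i) and (iii) of Theorem \ref{t3}, and the majorization hypotheses on $(h(p^*_1),h(p^*_2))$, $(h(p_1),h(p_2))$, $(\la^*_1,\la^*_2)$ and $(\la_1,\la_2)$, together with the membership requirements $(\lab,h(\pb))\in S$ and $(\lab^*,h(\pb^*))\in S$, are identical in the two statements. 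Thus the entire task reduces to verifying condition (ii) of Theorem \ref{t3} for the scale family under the assumption that the baseline density $f$ is decreasing.

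First I would establish the decreasing property by differentiating once,
\[
\frac{\partial}{\partial \la}\Fb(\la x)=-x\,f(\la x),
\]
which is nonpositive for every $x\in\R_+$ because $f\geq 0$; hence $\Fb(\la x)$ is decreasing in $\la$. Next I would establish convexity by differentiating a second time,
\[
\frac{\partial^2}{\partial \la^2}\Fb(\la x)=-x^2\,f'(\la x).
\]
By hypothesis (ii) of Theorem \ref{t5} the baseline density $f$ is decreasing, so $f'(\la x)\leq 0$ and the right-hand side is nonnegative for all $x\in\R_+$; therefore $\Fb(\la x)$ is convex in $\la$. This confirms condition (ii) of Theorem \ref{t3}.

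With all three conditions of Theorem \ref{t3} now in force, together with the shared majorization and membership assumptions, I would simply invoke Theorem \ref{t3} to conclude $Y^*_{2:2}\leq_{\rm st}Y_{2:2}$. This argument carries no genuine obstacle --- it is essentially a corollary --- but the single step requiring care is the convexity verification, since it is precisely there that the decreasing-density assumption is consumed: a baseline density that increased somewhere on $\R_+$ would make the second derivative negative there, destroy convexity in $\la$, and hence break the reduction to Theorem \ref{t3}.
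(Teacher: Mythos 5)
Your proposal is correct and follows essentially the same route as the paper: both reduce Theorem \ref{t5} to Theorem \ref{t3} by checking that a decreasing baseline density makes $\Fb(\la x)$ decreasing and convex in $\la$, with your version merely spelling out the derivative computations that the paper leaves as ``easily verified.'' (A minor refinement: to avoid assuming $f$ is differentiable, you can note instead that $\frac{\partial}{\partial\la}\Fb(\la x)=-xf(\la x)$ is nondecreasing in $\la$ because $f$ is decreasing and $x\geq 0$, which already yields convexity.)
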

\begin{proof}
Note that the conditions (i) and (iii) are similar to the conditions (i) and (iii) of Theorem \ref{t3}. Also, it can be easily verified that  the condition (ii) of this theorem, satisfies the condition (ii) of Theorem \ref{t3}, which holds the desired result.
\end{proof}

Gamma distribution is one of the most applicable distributions to depict the claim amounts whenever the shape parameter is less than 1. $X$ has the gamma distribution with the shape parameter $\al$ and the scale parameter $\la$, denoted by $X \thicksim \Gamma(\al,\la)$, if its density function is given by
\begin{equation*}
f(x;\al,\la)=\frac{\la^{\al}}{\Gamma(\al)}x^{\al-1} e^{-\la x},\quad x\in \R_{++}.
\end{equation*}
The following example provides a numerical example to illustrate the validity of Theorem \ref{t5}.
\begin{example}\label{ex1}
Let $ X_{\la_i}\thicksim \Gamma(0.8,\la_i) $ ($  X_{\la^*_i}\thicksim \Gamma(0.8,\la^*_i) $), for $i = 1, 2 $, with the associated FGM copula. It is clear that this copula is PQD if $\theta\in [0,1]$. Further, suppose that $I_{p_1}, I_{p_2}$  ($I_{p^*_1},I_{p^*_2} $) is a set of independent Bernoulli random variables, independent of the $X_{\la_i}$'s ($X_{\la^*_i}$'s), with ${\rm E}[I_{p_i}]=p_i$ (${\rm E}[I_{p^*_i}]=p^*_i$), for $i=1,2$. We take $h(p)=p$, $(\la_1,\la_2)=(0.26,0.74)$, $(p_1,p_2)=(0.03,0.02)$, $(\la^*_1,\la^*_2)=(0.4,0.6)$, $(p^*_1,p^*_2)=(0.026,0.024)$ and $\theta=0.5$. Using Lemma \ref{l6} and Lemma \ref{l8}, we get the condition (iii) of Theorem \ref{t5}, and obviously can be verified that the other conditions are also satisfied. So, we have $Y^*_{2:2}\leq_{{\rm st}}Y_{2:2}$. Figure \ref{fig1} represents the survival function of $Y_{2:2}$ and $Y^*_{2:2}$, which agrees with the intended result.
\end{example}

\begin{figure}[!h]
\centerline{\includegraphics[width=8cm,height=8cm]{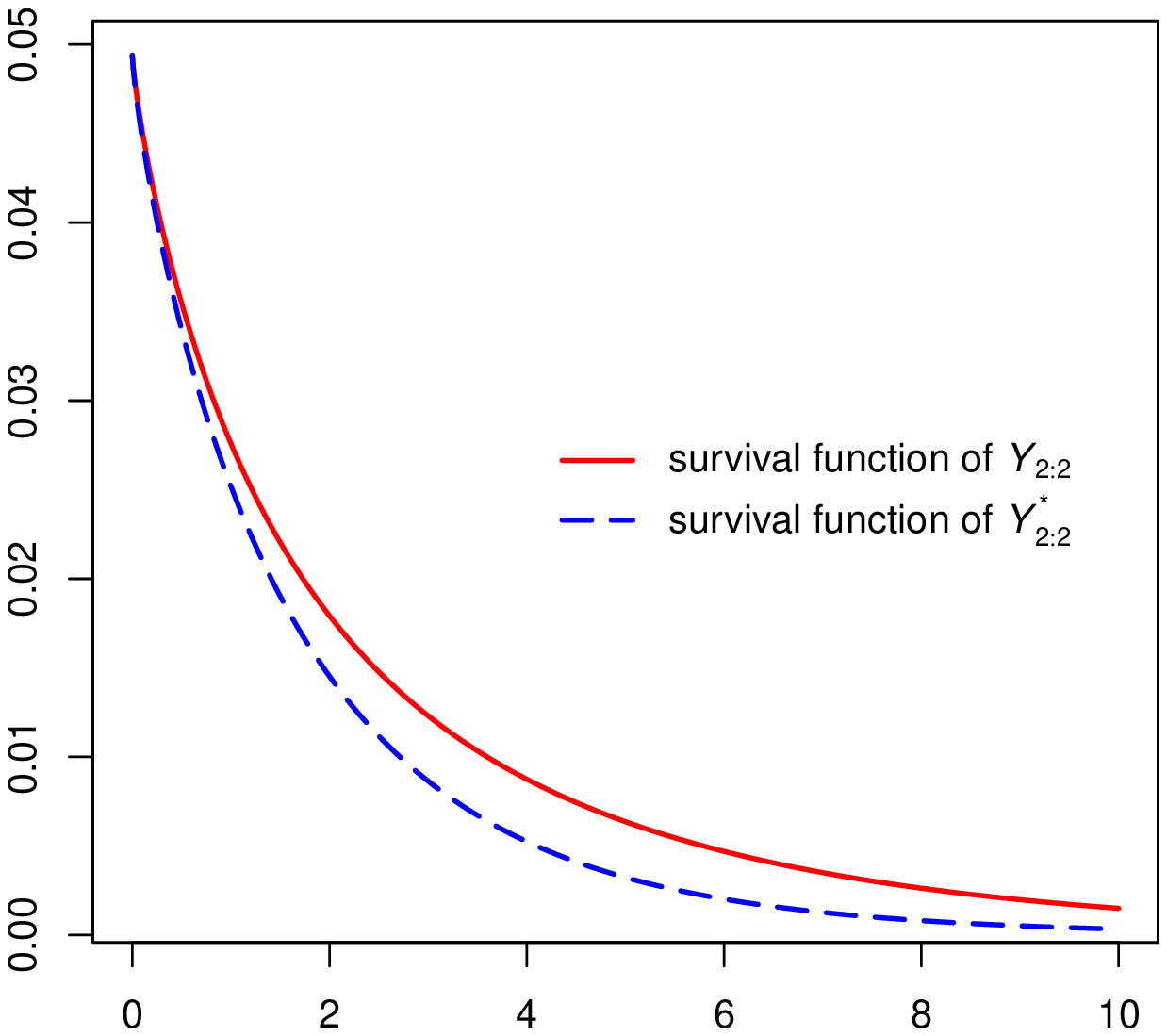}}
\vspace{-0.8cm} \caption{\small{ Plots of the survival functions of $Y_{2:2}$ and $Y^*_{2:2}$ in Example \ref{ex1}.}}\label{fig1}
\end{figure}

The following example illustrates that the conditions $(\lab,h(\pb))\in S$ and $(\lab^*,h(\pb^*))\in S$ is an important condition and can not be dropped.

\begin{example}\label{ex2}
Under the same setup in Example \ref{ex1}, we take $(p_1,p_2)=(0.02,0.03)$ and $(p^*_1,p^*_2)=(0.028,0.022)$ with the other unchanged values. It is clear that $(\lab,h(\pb))\notin S$, but it can be easily verified that the other conditions of Theorem \ref{t5} are satisfied. Figure \ref{fig2} represents the survival function of $Y_{2:2}$ and $Y^*_{2:2}$, which cross each other.
\end{example}

\begin{figure}[!h]
\centerline{\includegraphics[width=8cm,height=8cm]{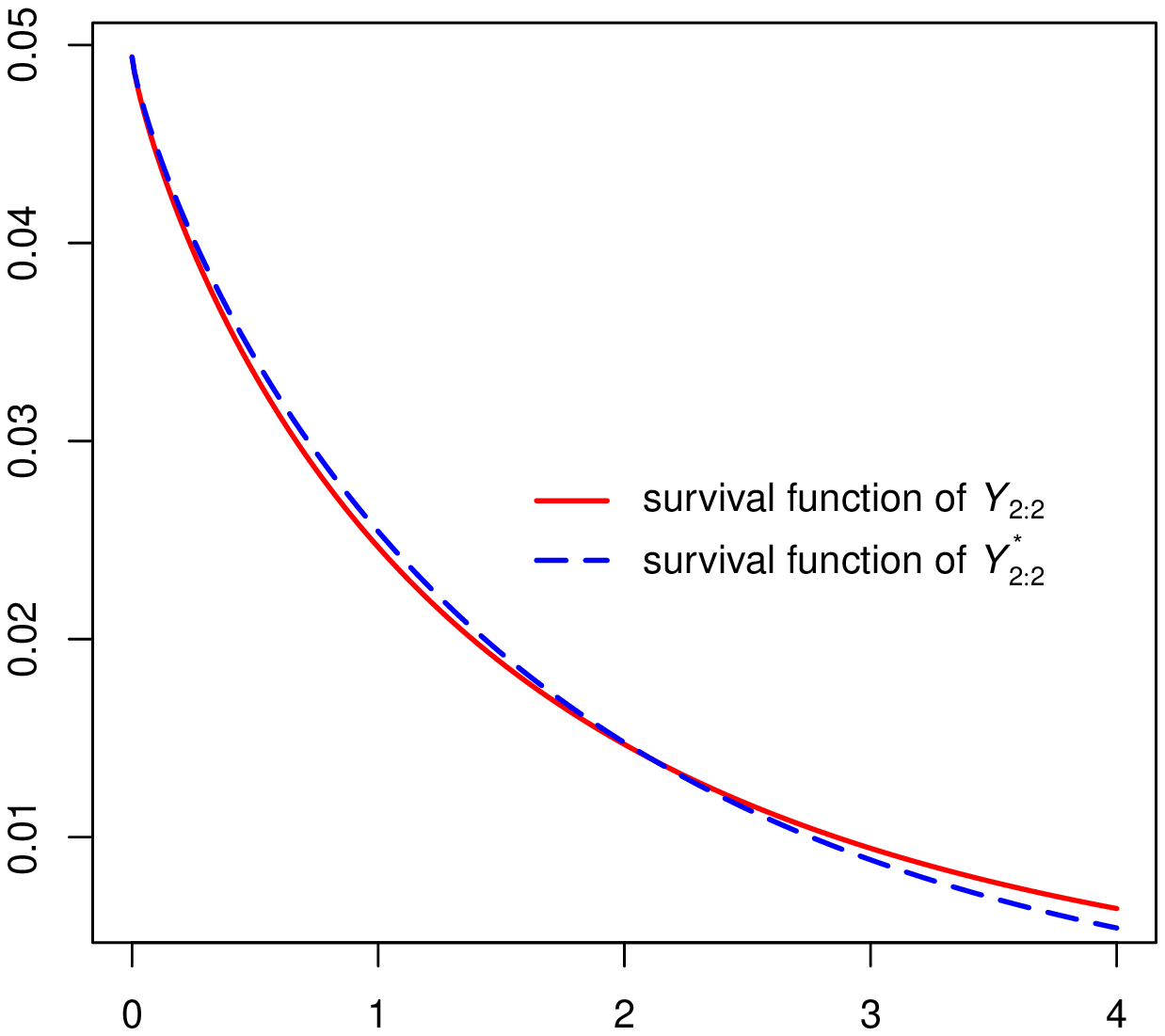}}
\vspace{-0.8cm} \caption{\small{ Plots of the survival functions of $Y_{2:2}$ and $Y^*_{2:2}$ in Example \ref{ex2}.}}\label{fig2}
\end{figure}

The proportional hazard rate (PHR) model is a flexible family of distributions with an important role in reliability theory, actuarial science and other fields; see for example Cox\cite{cox}, Finkelstein\cite{fin}, Kumar and Klefsj\"{o}\cite{kukl}, Balakrishnan et al.\cite{baet} and Li and Li\cite{lili2}.  $X_{\la}$ is said to follow PHR model, if its survival function can be expressed as $\Fb(x;\la)=[\Fb(x)]^{\la}$, where $\Fb(x)$ is the baseline survival function and $\la>0$.

The following theorem provides a comparison between the largest
claim amounts in two heterogeneous portfolio of risks, whenever the marginal distributions belonging to the PHR model.

\begin{theorem}\label{t6}
Let $\Fb(x;\la_i)=[\Fb(x)]^{\la_i}$ and $\Fb(x;\la^*_i)=[\Fb(x)]^{\la^*_i}$, for $i=1,2$. Under the setup of Theorem \ref{t3}, suppose that the following conditions hold:
\begin{itemize}
\item[{\rm (i)}] $h:(0,1]\rightarrow I\subset\R_{++}$ is a differentiable and strictly increasing concave function, with the log-concave inverse;
\item[{\rm (ii)}] $C$ is PQD and $\frac{\partial C(v_1,v_2)}{\partial v_1}\geq \frac{\partial C(v_1,v_2)}{\partial v_2}$, for all $0\leq v_1\leq v_2\leq 1$.
\end{itemize}
Then, for $(\lab,h(\pb))\in S$ and  $(\lab^*,h(\pb^*))\in S$, we have
\begin{eqnarray*}
(h(p^*_1), h(p^*_2))  \mathop \preceq \limits^{{\mathop{\rm m}} } (h(p_1), h(p_2))~~\text{and}~~(\la^*_1,\la^*_2)\mathop \preceq \limits^{{\mathop{\rm w}} } (\la_1, \la_2) \Longrightarrow Y^*_{2:2} \leq_{{\rm st}}Y_{2:2}.
\end{eqnarray*}
\end{theorem}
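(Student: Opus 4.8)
The plan is to reduce Theorem \ref{t6} to a direct application of Theorem \ref{t3}, exactly as Theorem \ref{t5} handled the scale family. Comparing the hypotheses, condition (i) of Theorem \ref{t6} coincides verbatim with condition (i) of Theorem \ref{t3}, and condition (ii) of Theorem \ref{t6} coincides with condition (iii) of Theorem \ref{t3}. The only hypothesis of Theorem \ref{t3} not assumed directly is its condition (ii), namely that $\Fb(x;\la)$ be decreasing and convex in $\la$ for each fixed $x\in\R_+$. Hence the entire task is to verify that the PHR parametrization $\Fb(x;\la)=[\Fb(x)]^{\la}$ enjoys this monotonicity and convexity automatically, after which Theorem \ref{t3} delivers the conclusion.

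To carry this out, first I would write $\Fb(x;\la)=[\Fb(x)]^{\la}=\exp\!\big(\la\log\Fb(x)\big)$ and note that, since $\Fb(x)\in[0,1]$, the quantity $\log\Fb(x)$ is nonpositive for every $x$. Differentiating once in $\la$ gives $\partial_{\la}\Fb(x;\la)=[\Fb(x)]^{\la}\log\Fb(x)\leq 0$, so $\Fb(x;\la)$ is decreasing in $\la$. Differentiating a second time gives $\partial^2_{\la}\Fb(x;\la)=[\Fb(x)]^{\la}\big(\log\Fb(x)\big)^2\geq 0$, so $\Fb(x;\la)$ is convex in $\la$. This establishes condition (ii) of Theorem \ref{t3}.

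With all three conditions of Theorem \ref{t3} verified, the stated implication follows at once: under $(\lab,h(\pb))\in S$ and $(\lab^*,h(\pb^*))\in S$, the joint majorization hypotheses $(h(p^*_1),h(p^*_2))\mathop\preceq\limits^{{\mathop{\rm m}}}(h(p_1),h(p_2))$ and $(\la^*_1,\la^*_2)\mathop\preceq\limits^{{\mathop{\rm w}}}(\la_1,\la_2)$ yield $Y^*_{2:2}\leq_{{\rm st}}Y_{2:2}$.

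I do not expect a genuine obstacle here, since the reduction is purely a matter of checking hypotheses. The one place where a little care is needed is the convexity step, which hinges on $\log\Fb(x)\leq 0$ (guaranteed by $\Fb(x)\leq 1$) together with the elementary fact that $\la\mapsto e^{c\la}$ is convex for any constant $c$. If anything could go wrong it would be at the boundary $\Fb(x)=0$, where $\log\Fb(x)=-\infty$; but there $[\Fb(x)]^{\la}\equiv 0$ for all $\la>0$, so the survival function is trivially decreasing and (constant, hence) convex in $\la$, and the argument is unaffected.
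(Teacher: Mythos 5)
Your proposal is correct and follows exactly the paper's own route: the paper likewise reduces Theorem \ref{t6} to Theorem \ref{t3} by observing that $[\Fb(x)]^{\la}$ is decreasing and convex in $\la$, a fact it asserts without the explicit differentiation you supply. Your added computation of the first and second derivatives in $\la$ simply fills in that verification.
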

\begin{proof}
Note that $\Fb(x;\la)=[\Fb(x)]^{\la}$ is decreasing and convex in $\la$, which satisfies the condition (ii) of Theorem \ref{t3}. Therefore, applying Theorem \ref{t3} completes the proof.
\end{proof}

The Pareto distribution is a special case of the PHR model, which is commonly used as the distribution of claim severity from policyholders in insurance. $X$ has the Pareto distribution with parameters $\be$ and $\la$, denoted by $X \thicksim {\rm Pareto}(\beta,\la)$, if its survival function is given by
\begin{equation*}
\Fb(x;\be,\la)=(\frac{\beta}{x})^{\la},\quad x\geq \beta.
\end{equation*}
The following example provides a numerical example to illustrate the validity of Theorem \ref{t6}.
\begin{example}\label{ex3}
Let $ X_{\la_i}\thicksim {\rm Pareto}(1,\la_i) $ ($  X_{\la^*_i}\thicksim {\rm Pareto}(1,\la^*_i) $), for $i = 1, 2 $, with the associated Ali-Mikhail-Haq copula, which introduced by Ali et al.\cite{ali}, of the form $C_{\theta}(v_1,v_2)=\frac{v_1v_2}{1-\theta(1-v_1)(1-v_2)}$, where $\theta\in [-1,1]$. According to Nelsen\cite{nel}, this copula is Archimedean and obviously is PQD if $\theta\in [0,1]$ . Further, suppose that $I_{p_1}, I_{p_2}$  ($I_{p^*_1},I_{p^*_2} $) is a set of independent Bernoulli random variables, independent of the $X_{\la_i}$'s ($X_{\la^*_i}$'s), with ${\rm E}[I_{p_i}]=p_i$ (${\rm E}[I_{p^*_i}]=p^*_i$), for $i=1,2$. We take $h(p)=\log(p+2)$, $(\la_1,\la_2)=(4,2)$, $(p_1,p_2)=(0.02,0.06)$, $(\la^*_1,\la^*_2)=(4,6)$, $(p^*_1,p^*_2)=(0.0479,0.0319)$ and $\theta=0.3$. Lemma \ref{l6} and Lemma \ref{l7} imply the condition (ii) of Theorem \ref{t6}, and it can be easily verified that the other condition is also satisfied. So, we have $Y^*_{2:2}\leq_{{\rm st}}Y_{2:2}$. Figure \ref{fig3} represents the survival function of $Y_{2:2}$ and $Y^*_{2:2}$, which agrees with the intended result.
\end{example}

\begin{figure}[!h]
\centerline{\includegraphics[width=8cm,height=8cm]{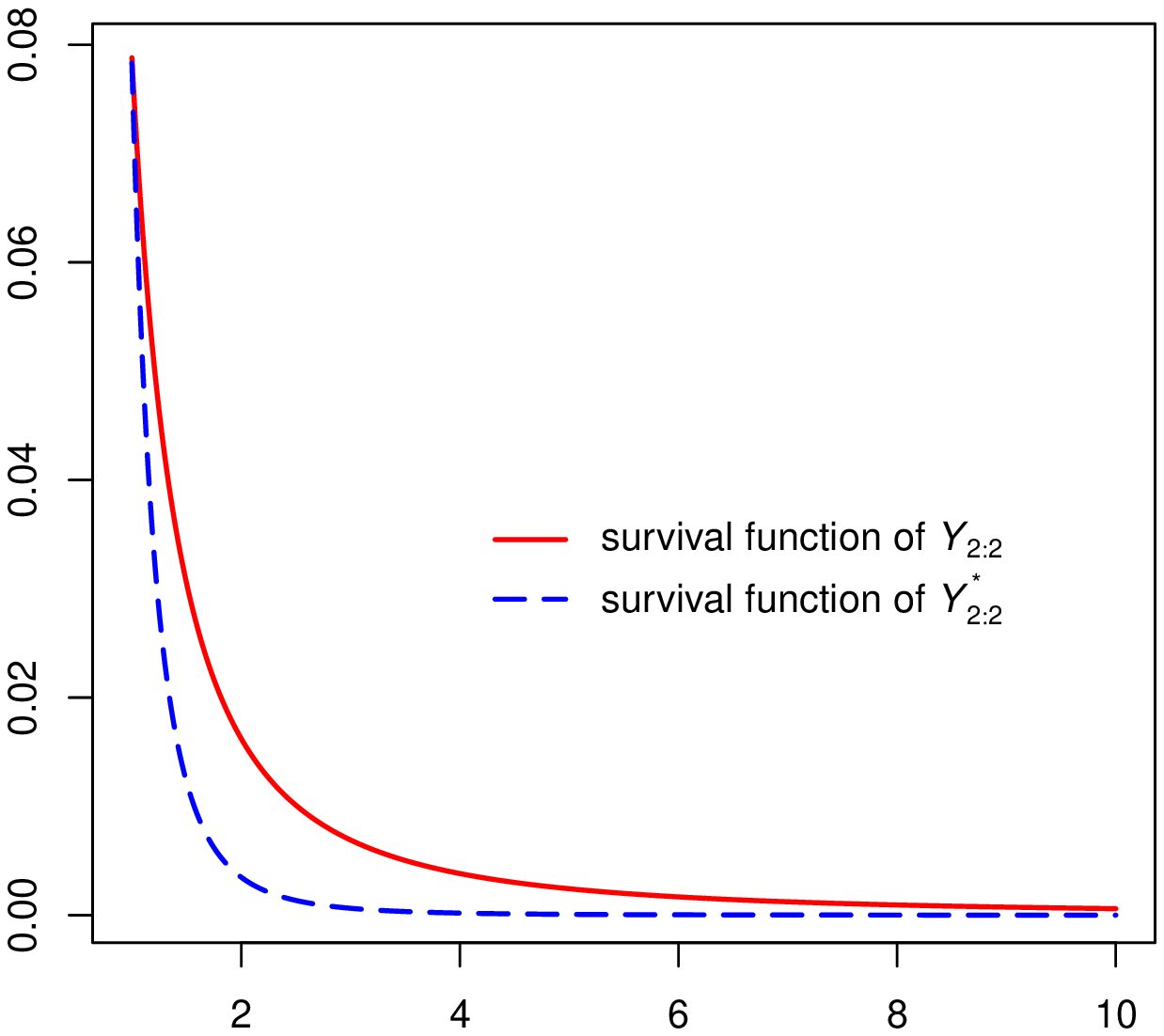}}
\vspace{-0.8cm} \caption{\small{ Plots of the survival functions of $Y_{2:2}$ and $Y^*_{2:2}$ in Example \ref{ex3}.}}\label{fig3}
\end{figure}

The transmuted-G $ ({\rm TG})$ model, which introduced by Mirhossaini and Dolati\cite{mir2} and Shaw and Buckley\cite{shbu}, is an attractive model for constructing new flexible distributions by adding a new parameter. The random variables $X_{\la}$ said to belong to the ${\rm TG}$ model with the baseline distribution function $F(x)$ and survival $\Fb(x)$, if its survival function can be expressed as $\Fb(x;\la)=\Fb(x)(1-\la F(x))$, where $ \la \in [-1,1]$.

The following theorem provides a comparison between the largest
claim amounts in two heterogeneous portfolio of risks, whenever the marginal distributions belonging to TG model.

\begin{theorem}\label{t7}
Let $\Fb(x;\la_i)=\Fb(x)(1-\la_i F(x))$ and $\Fb(x;\la^*_i)=\Fb(x)(1-\la^*_i F(x))$, for $i=1,2$. Under the setup of Theorem \ref{t3}, suppose that the following conditions hold:
\begin{itemize}
\item[{\rm (i)}] $h:(0,1]\rightarrow I\subset\R_{++}$ is a differentiable and strictly increasing concave function, with the log-concave inverse;
\item[{\rm (ii)}] $C$ is PQD and $\frac{\partial C(v_1,v_2)}{\partial v_1}\geq \frac{\partial C(v_1,v_2)}{\partial v_2}$, for all $0\leq v_1\leq v_2\leq 1$.
\end{itemize}
Then, for $(\lab,h(\pb))\in S$ and  $(\lab^*,h(\pb^*))\in S$, we have
\begin{eqnarray*}
(h(p^*_1), h(p^*_2))  \mathop \preceq \limits^{{\mathop{\rm m}} } (h(p_1), h(p_2))~~\text{and}~~(\la^*_1,\la^*_2)\mathop \preceq \limits^{{\mathop{\rm w}} } (\la_1, \la_2) \Longrightarrow Y^*_{2:2} \leq_{{\rm st}}Y_{2:2}.
\end{eqnarray*}
\end{theorem}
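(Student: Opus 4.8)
The plan is to follow exactly the reduction already used for the scale family in Theorem \ref{t5} and for the PHR model in Theorem \ref{t6}: I would show that the transmuted-G structure automatically forces the survival function to satisfy condition (ii) of Theorem \ref{t3}, and then let Theorem \ref{t3} carry the full conclusion. Since conditions (i) and (ii) of the present statement are verbatim conditions (i) and (iii) of Theorem \ref{t3}, the only thing left to check is that $\Fb(x;\la)=\Fb(x)(1-\la F(x))$ is decreasing and convex in $\la$ on $[-1,1]$ for every fixed $x\in\R_+$.

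To that end I would differentiate with respect to $\la$. The first partial derivative is
\begin{equation*}
\frac{\partial \Fb(x;\la)}{\partial \la}=-\Fb(x)F(x)\leq 0,
\end{equation*}
which is nonpositive because $\Fb(x)\geq 0$ and $F(x)\geq 0$ for every $x$; hence $\Fb(x;\la)$ is decreasing in $\la$. Since $\Fb(x;\la)$ is affine in $\la$, the second partial derivative satisfies $\partial^2 \Fb(x;\la)/\partial \la^2=0\geq 0$, which is (weak) convexity in $\la$. Thus condition (ii) of Theorem \ref{t3} holds for every $x\in\R_+$.

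With all three hypotheses of Theorem \ref{t3} now in force, I would simply invoke that theorem to obtain $Y^*_{2:2}\leq_{\rm st}Y_{2:2}$ under the stated majorization conditions, completing the argument. I do not expect a genuine obstacle: the whole content reduces to a single derivative showing that the transmuted-G survival function is linear (hence convex) and monotone in $\la$. The only point worth a moment's care is to confirm that the decreasing property holds over the entire admissible interval $\la\in[-1,1]$ rather than only for $\la\geq 0$; but since the derivative $-\Fb(x)F(x)$ carries no dependence on $\la$, this is immediate.
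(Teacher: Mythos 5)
Your argument is correct and follows essentially the same route as the paper: both reduce the claim to verifying that the transmuted-G survival function $\Fb(x)(1-\la F(x))$ is decreasing and convex in $\la$ (trivially so, being affine with nonpositive slope $-\Fb(x)F(x)$), and then invoke Theorem \ref{t3}. Your version merely makes the one-line derivative computation explicit, which the paper leaves implicit.
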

\begin{proof}
Note that $\Fb(x;\la)=\Fb(x)(1-\la F(x))$ is decreasing and convex in $\la$, which satisfies the condition (ii) of Theorem \ref{t3}. Therefore, applying Theorem \ref{t3} completes the proof.
\end{proof}

The transmuted exponential distribution, which introduced by Mirhossaini and Dolati\cite{mir2} has  non-negative support and can be used to simulate the claim severity from policyholders in insurance. $X$ has the transmuted exponential distribution with parameters $\mu$ and $\la$, denoted by $X \thicksim {\rm TE}(\mu,\la)$, if its survival function is given by
\begin{equation*}
\Fb(x,\mu,\la)=e^{-x/\mu}[1-\la(1-e^{-x/\mu})],\quad x\geq0,\quad \mu>0,\quad -1\leq \la \leq 1.
\end{equation*}

The following example provides a numerical example to illustrate the validity of Theorem \ref{t7}.
\begin{example}\label{ex4}
Let $ X_{\la_i}\thicksim {\rm TE}(3,\la_i) $ ($  X_{\la^*_i}\thicksim {\rm TE}(3,\la^*_i) $), for $i = 1, 2 $, with the associated Gumbel-Hougaard copula, which first introduced by Gumbel\cite{gum2}, of the form 
$$C_{\theta}(v_1,v_2)=\exp\bigg(-\left[(-\log v_1)^{\theta}+(-\log v_2)^{\theta}\right]^{1/\theta}\bigg),$$
 where $\theta\in [1,\infty)$. According to Nelsen\cite{nel}, this copula is Archimedean and is PQD. Further, suppose that $I_{p_1}, I_{p_2}$  ($I_{p^*_1},I_{p^*_2} $) is a set of independent Bernoulli random variables, independent of the $X_{\la_i}$'s ($X_{\la^*_i}$'s), with ${\rm E}[I_{p_i}]=p_i$ (${\rm E}[I_{p^*_i}]=p^*_i$), for $i=1,2$. We take $h(p)=\sqrt{p}$, $(\la_1,\la_2)=(0.6,-0.2)$, $(p_1,p_2)=(0.04,0.09)$, $(\la^*_1,\la^*_2)=(0.1,0.4)$, $(p^*_1,p^*_2)=(0.0676,0.0576)$ and $\theta=10$. Lemma \ref{l6} and Lemma \ref{l7} imply the condition (ii) of Theorem \ref{t7}, and it can be easily verified that the other condition is also satisfied. So, we have $Y^*_{2:2}\leq_{{\rm st}}Y_{2:2}$. Figure \ref{fig4} represents the survival function of $Y_{2:2}$ and $Y^*_{2:2}$, which coincides with the intended result.
\end{example}

\begin{figure}[!h]
\centerline{\includegraphics[width=8cm,height=8cm]{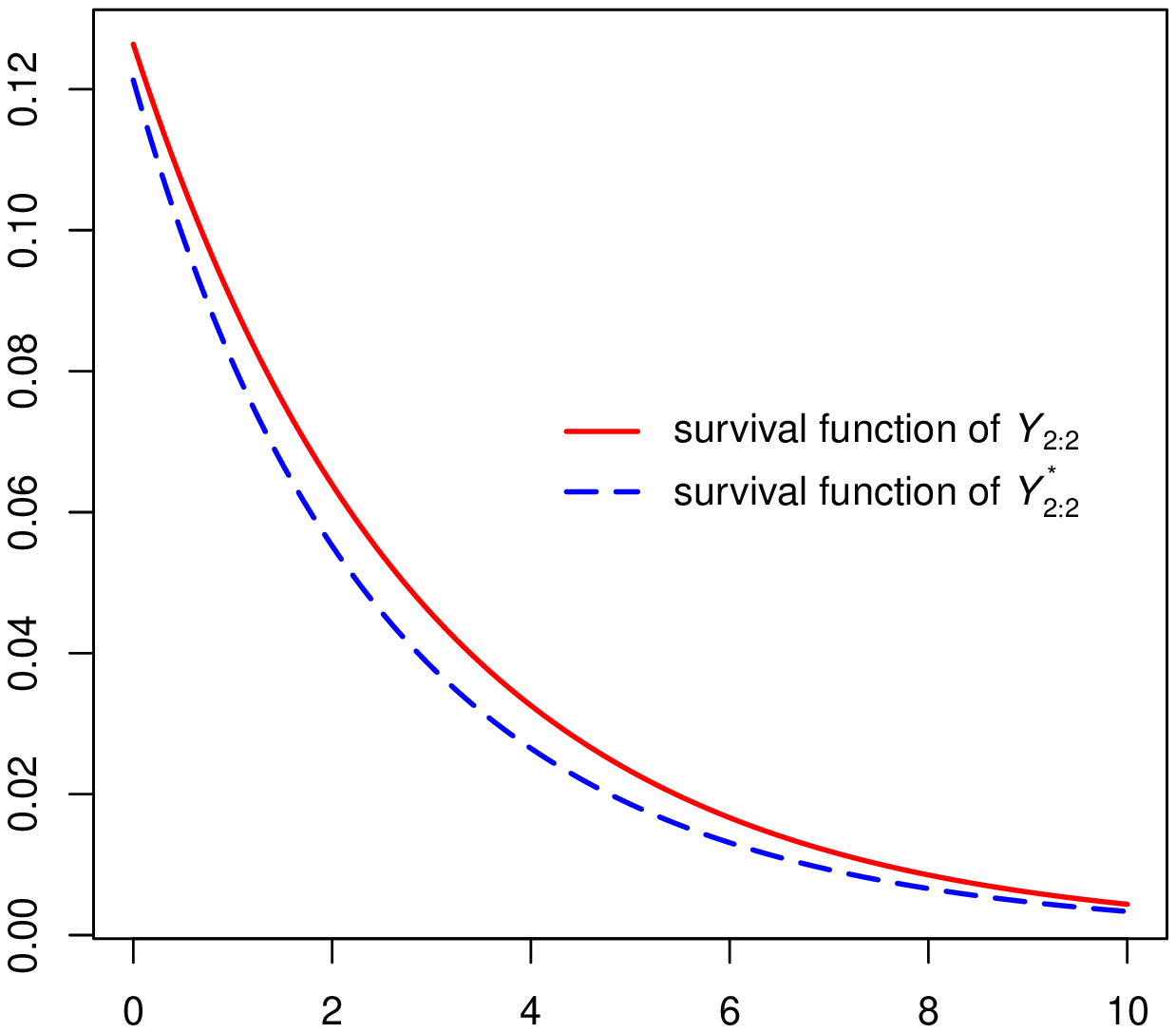}}
\vspace{-0.8cm} \caption{\small{ Plots of the survival functions of $Y_{2:2}$ and $Y^*_{2:2}$ in Example \ref{ex4}.}}\label{fig4}
\end{figure}

Next, we consider the case that the occurrence probabilities are also interdependent. Here, we denote $\boldsymbol{I}=(I_1,I_2) $ and $P(\boldsymbol{I}=\boldsymbol{\mu})=p(\boldsymbol{\mu})$. The following lemma considers the concept of weakly stochastic arrangement increasing through left tail probability ($\rm LWSAI$) for $\boldsymbol{I}$, which is a particular case of Lemma 5.3 of Cai and Wei\cite{cw}.

\begin{lemma}\label{l5}
A bivariate Bernoulli random vector $\boldsymbol{I}$ is $\rm LWSAI$, if and only if $p(1,0)\leq p(0,1)$.
\end{lemma}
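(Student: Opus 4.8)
The plan is to take the definition of $\rm LWSAI$ from Cai and Wei's Lemma 5.3, specialize it to dimension two, and show that the whole family of defining inequalities collapses to the single condition $p(1,0)\leq p(0,1)$. Recall that $\rm LWSAI$ (weakly stochastic arrangement increasing through the left tail) asks that the left-tail probability map $\boldsymbol{s}\mapsto P(I_1\leq s_1,I_2\leq s_2)$ be arrangement increasing, i.e. that exchanging two thresholds so that they become increasingly ordered can only raise the probability. Concretely, for the bivariate case this means
\begin{equation*}
P(I_1\leq s_1,I_2\leq s_2)\geq P(I_1\leq s_2,I_2\leq s_1)\qquad\text{whenever } s_1\leq s_2 .
\end{equation*}
So the first step is just to write this specialized inequality down from the general statement.

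Next I would use that $\boldsymbol{I}$ is Bernoulli, so each coordinate lives in $\{0,1\}$ and the thresholds $s_1,s_2$ need only range over $\{0,1\}$. When $s_1=s_2$ the transposition is trivial and the inequality holds with equality, and the pair $(s_1,s_2)=(1,0)$ is merely the reverse of $(0,1)$; hence the only binding instance is $(s_1,s_2)=(0,1)$. For that instance I would compute both sides from $P(\boldsymbol{I}=\boldsymbol{\mu})=p(\boldsymbol{\mu})$, namely
\begin{equation*}
P(I_1\leq 0,I_2\leq 1)=p(0,0)+p(0,1),\qquad P(I_1\leq 1,I_2\leq 0)=p(0,0)+p(1,0),
\end{equation*}
and subtract, so that the defining inequality reduces exactly to $p(0,1)\geq p(1,0)$.

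Finally I would read off the equivalence in both directions: since the sole nontrivial defining inequality of $\rm LWSAI$ in dimension two is equivalent to $p(1,0)\leq p(0,1)$, the vector $\boldsymbol{I}$ is $\rm LWSAI$ if and only if $p(1,0)\leq p(0,1)$. The main obstacle is not computational but definitional: I must pin down precisely the form of $\rm LWSAI$ as stated in Cai and Wei's Lemma 5.3 (in particular the orientation of the arrangement inequality) and verify that, for a bivariate Bernoulli vector, every such inequality is either an equality or a restatement of the single case $(0,1)$, so that no further constraints are hidden. Once the definition is fixed, the reduction above is routine.
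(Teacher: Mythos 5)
Your derivation is correct under the standard reading of $\rm LWSAI$, but note that the paper itself offers no proof at all: the lemma is simply quoted as the bivariate specialization of Lemma 5.3 of Cai and Wei (2015), which already characterizes $\rm LWSAI$ multivariate Bernoulli vectors by inequalities on the point masses. Your proposal instead unwinds the definition directly -- the left-tail function $\boldsymbol{s}\mapsto P(I_1\leq s_1, I_2\leq s_2)$ must be arrangement increasing, real thresholds reduce to the grid $\{0,1\}^2$ because the coordinates are $\{0,1\}$-valued, the cases $s_1=s_2$ and $s_1>s_2$ impose nothing, and the single binding instance $(s_1,s_2)=(0,1)$ reads $p(0,0)+p(0,1)\geq p(0,0)+p(1,0)$ -- which is exactly the claimed condition, in both directions. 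This buys a self-contained argument where the paper has only a citation, at the cost of having to fix the precise orientation of the arrangement inequality in Cai and Wei's definition (which you rightly flag as the only real risk; the orientation you chose is the one consistent with how Lemma \ref{l5} is used in the proof of Theorem \ref{t8}, where $p(1,0)\leq p(0,1)$ multiplies the nonnegative quantity $\Fb(x;\la^*_1)-\Fb(x;\la_1)$). One small point worth making explicit if you write this up: in the general $n$-dimensional definition the transposition inequality is required uniformly over the remaining coordinates, but for $n=2$ there are none, so no additional constraints are hidden.
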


The following theorem gives a comparison between the largest claim amounts in two heterogeneous portfolio of risks, whenever the occurrence probabilities are interdependent.

\begin{theorem}\label{t8}
Let $ X_{\la_1}$ and $X_{\la_2} $ ($ X_{\la^*_1}$ and $X_{\la^*_2} $) be non-negative random variables with
$ X_{\la_i} \thicksim \Fb(x;\la_i)$ ($ X_{\la^*_i} \thicksim \Fb(x;\la^*_i)$), $i = 1, 2 $, and associated copula C. Further, suppose that $\boldsymbol{I}$ is $\rm LWSAI$, and independent of the $X_{\la_i}$'s ($X_{\la^*_i}$'s). Assume that the following conditions hold:
\begin{itemize}
\item[{\rm (i)}] $\Fb(x;\la)$ is decreasing and convex in $\la$ for any $x\in \R_+$;
\item[{\rm (ii)}] $(\la^*_1,\la^*_2)\mathop \preceq \limits^{{\mathop{\rm m}} } (\la_1, \la_2)$, such that $\la_1\geq \la_2$ and $\la^*_1\geq \la^*_2$;
\item[{\rm (iii)}] $C$ is Schur-concave.
\end{itemize}
Then, we have $Y^*_{2:2} \leq_{{\rm st}}Y_{2:2}$.
\end{theorem}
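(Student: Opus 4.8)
The plan is to condition on the dependent indicator vector $\boldsymbol{I}=(I_1,I_2)$ to obtain a closed form for the distribution function of $Y_{2:2}$, and then to show that the negative of this distribution function increases under the majorization $(\la^*_1,\la^*_2)\preceq^{\rm m}(\la_1,\la_2)$ by a direct derivative comparison on the ordered region $\{\la_1\ge\la_2\}$. First, summing over the four realizations of $\boldsymbol{I}$ and using its independence from the $X_{\la_i}$'s, for $x\ge 0$ I would write
\[
G_{Y_{2:2}}(x)=p(0,0)+p(1,0)F(x;\la_1)+p(0,1)F(x;\la_2)+p(1,1)\,C\big(F(x;\la_1),F(x;\la_2)\big),
\]
where the single-index terms arise from the events $\{X_{\la_i}\le x\}$ (with the other indicator equal to $0$ and $x\ge 0$) and the last term from $\{X_{\la_1}\le x,\,X_{\la_2}\le x\}$. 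Setting $\Psi(\lab)=-G_{Y_{2:2}}(x)$, the target $Y^*_{2:2}\le_{\rm st}Y_{2:2}$ is equivalent to $\Psi(\lab^*)\le\Psi(\lab)$.

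Second, using $F=1-\Fb$, I would compute
\[
\frac{\partial\Psi}{\partial\la_1}=-\frac{\partial F(x;\la_1)}{\partial\la_1}\Big[p(1,0)+p(1,1)\tfrac{\partial C}{\partial v_1}\Big],\qquad \frac{\partial\Psi}{\partial\la_2}=-\frac{\partial F(x;\la_2)}{\partial\la_2}\Big[p(0,1)+p(1,1)\tfrac{\partial C}{\partial v_2}\Big],
\]
and establish $\Psi_{(1)}(\lab)\ge\Psi_{(2)}(\lab)$ on $\{\la_1\ge\la_2\}$, i.e. $\partial G/\partial\la_2\ge\partial G/\partial\la_1$. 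The three inputs are: (a) $\Fb(x;\la)$ decreasing and convex in $\la$ gives $0\le\partial F(x;\la_1)/\partial\la_1\le\partial F(x;\la_2)/\partial\la_2$; (b) since $\la_1\ge\la_2$ and $\Fb$ is decreasing, $v_1:=F(x;\la_1)\ge F(x;\la_2)=:v_2$, so Schur-concavity of $C$ (Lemma \ref{l6}, via symmetry) yields $\partial C/\partial v_2\ge\partial C/\partial v_1\ge 0$; and (c) the LWSAI hypothesis, through Lemma \ref{l5}, gives exactly $p(0,1)\ge p(1,0)$. Multiplying the matched nonnegative factors then delivers $\partial G/\partial\la_2\ge\partial G/\partial\la_1$.

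Third, I would conclude by interpolation. Condition (ii) places both $\lab$ and $\lab^*$ in the convex region $\{\la_1\ge\la_2\}$ and gives $\la_2\le\la^*_2\le\la^*_1\le\la_1$ with equal sums; setting $\lab(t)=(1-t)\lab^*+t\lab$, the path stays in this region, and differentiating gives
\[
\frac{d}{dt}\Psi(\lab(t))=(\la_1-\la^*_1)\big(\Psi_{(1)}(\lab(t))-\Psi_{(2)}(\lab(t))\big)\ge 0,
\]
because $\la_1-\la^*_1\ge 0$ and the bracket is nonnegative by the previous step. Hence $\Psi(\lab^*)\le\Psi(\lab)$, which is the desired conclusion.

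The \emph{main obstacle} is that $G_{Y_{2:2}}$ is not symmetric in $(\la_1,\la_2)$, since the dependent indicators force $p(1,0)\ne p(0,1)$; consequently one cannot simply invoke Schur-convexity or the weak-majorization criterion (Lemma \ref{mkl}) as in Theorems \ref{t1} and \ref{t2}. The role of the LWSAI hypothesis is precisely to supply the inequality $p(0,1)\ge p(1,0)$ that compensates for this asymmetry, and the explicit interpolation argument—rather than a symmetric Schur-convexity statement—is what makes the plain-majorization conclusion rigorous for a non-symmetric objective.
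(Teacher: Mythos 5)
Your proof is correct and rests on the same three pillars as the paper's --- the conditional decomposition of $G_{Y_{2:2}}$ over the four values of $\boldsymbol{I}$, the inequality $p(1,0)\le p(0,1)$ supplied by Lemma \ref{l5}, and the monotonicity/convexity of $\Fb(x;\la)$ in $\la$ together with Schur-concavity of $C$ --- but it assembles them by a genuinely different route. The paper never differentiates: it splits $G_{Y_{2:2}}(x)-G_{Y^*_{2:2}}(x)$ into the $p(1,1)$ term, which is disposed of by showing $C\big(F(x;\la_1),F(x;\la_2)\big)$ is Schur-concave in $\lab$ (an increasing Schur-concave function composed with increasing concave coordinate maps, via Marshall et al.), and the two linear terms, which are handled by swapping $p(1,0)$ for the larger $p(0,1)$ and then invoking Schur-convexity of $\Fb(x;\la_1)+\Fb(x;\la_2)$. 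You instead run a single derivative comparison $\partial G/\partial\la_2\ge\partial G/\partial\la_1$ on the cone $\{\la_1\ge\la_2\}$ and integrate along the linear majorization path; your diagnosis that the asymmetry $p(1,0)\neq p(0,1)$ blocks a direct appeal to Lemma \ref{mkl} is exactly right, and your factor-by-factor product inequality is valid because all four factors are nonnegative and correctly matched. The one caveat is that your argument needs $C$ to be continuously differentiable (to invoke Lemma \ref{l6} and to make sense of $\partial C/\partial v_i$ along the path) and $F(x;\la)$ to be differentiable in $\la$, neither of which is among the stated hypotheses, whereas the paper's derivative-free argument applies to any Schur-concave $C$ and any convex $\Fb(\cdot\,;\la)$. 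Since copulas and convex functions are differentiable almost everywhere, and the paper itself freely differentiates $F(x;\la)$ under the same hypothesis in Theorem \ref{t2}, this is a minor loss of generality rather than a gap.
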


\begin{proof}
Let $X_{2:2}=\max(X_{\la_1},X_{\la_2})$ and $X^*_{2:2}=\max(X_{\la^*_1},X_{\la^*_2})$. First, we prove that $X^*_{2:2}\leq_{{\rm st}} X_{2:2}$. It is enough to show that the function
\begin{equation*}
F_{X_{2:2}}(x)=C(F(x;\la_1),F(x;\la_2)),
\end{equation*}
is Schur-concave in $\lab$. According to Marshal et al.\cite{met}, Page 91, Table 2,  Schur-concavity of $C$ and increasing and concavity properties of $F(x;\la)$ in $\la$, implies that $F_{X_{2:2}}(x)$ is increasing and Schur-concave in $\lab$. Thus, condition (ii) implies
\begin{equation}\label{eq8}
X^*_{2:2}\leq_{{\rm st}} X_{2:2}.
\end{equation}
Also, according to Marshal et al.\cite{met}, the convexity of $\Fb(x;\la_i)$ in $\la_i$, implies the Schur-convexity of $\Fb(x;\la_1)+\Fb(x;\la_2)$ in $\lab$. Thus, the condition (ii) implies that
\begin{equation}\label{eq9}
\Fb(x;\la^*_1)+\Fb(x;\la^*_2)\leq \Fb(x;\la_1)+\Fb(x;\la_2).
\end{equation}
Note that
\begin{equation*}
G_{Y_{2:2}}(x)=p(0,0)+p(1,1)F_{X_{2:2}}(x)+p(0,1)F(x;\la_2)+p(1,0)F(x;\la_1),
\end{equation*}
and similarly,
\begin{equation*}
G_{Y^*_{2:2}}(x)=p(0,0)+p(1,1)F_{X^*_{2:2}}(x)+p(0,1)F(x;\la^*_2)+p(1,0)F(x;\la^*_1).
\end{equation*}
Thus, we have
\begin{eqnarray*}
G_{Y_{2:2}}(x)-G_{Y^*_{2:2}}(x)&=&p(1,1)[F_{X_{2:2}}(x)-F_{X^*_{2:2}}(x)]+p(0,1)[F(x;\la_2)-F(x;\la^*_2)]\\
&&+p(1,0)[F(x;\la_1)-F(x;\la^*_1)]\\
&=&p(1,1)[\Fb_{X^*_{2:2}}(x)-\Fb_{X_{2:2}}(x)]+p(0,1)[\Fb(x;\la^*_2)-\Fb(x;\la_2)]\\
&&+p(1,0)[\Fb(x;\la^*_1)-\Fb(x;\la_1)]\\
&\leq &p(0,1)[\Fb(x;\la^*_2)-\Fb(x;\la_2)]+p(1,0)[\Fb(x;\la^*_1)-\Fb(x;\la_1)]\\
&\leq &p(0,1)[\Fb(x;\la^*_2)-\Fb(x;\la_2)]+p(0,1)[\Fb(x;\la^*_1)-\Fb(x;\la_1)]\\
&= &p(0,1)[\Fb(x;\la^*_1)+\Fb(x;\la^*_2)-\Fb(x;\la_1)-\Fb(x;\la_2)]\\
&\leq& 0,
\end{eqnarray*}
where the first inequality is due to \eqref{eq8}, the second inequality is according to Lemma \ref{l5} and the last inequality is based on \eqref{eq9}. Hence, it is proved that $G_{Y_{2:2}}(x)\leq G_{Y^*_{2:2}}(x)$ which completes the proof.
\end{proof}

In the following, three special cases of Theorem \ref{t8} with respect to the scale, PHR and TG models, are represented.

\begin{theorem}\label{t9}
Let $\Fb(x;\la_i)=\Fb(\la_i x)$ and $\Fb(x;\la^*_i)=\Fb(\la^*_i x)$, for $i=1,2$. Under the setup of Theorem \ref{t8}, suppose that the following conditions hold:
\begin{itemize}
\item[{\rm (i)}] $f(x)$ is decreasing in $x\in \R_+$;
\item[{\rm (ii)}] $(\la^*_1,\la^*_2)\mathop \preceq \limits^{{\mathop{\rm m}} } (\la_1, \la_2)$, such that $\la_1\geq \la_2$ and $\la^*_1\geq \la^*_2$;
\item[{\rm (iii)}] $C$ is Schur-concave.
\end{itemize}
Then, we have $Y^*_{2:2} \leq_{{\rm st}}Y_{2:2}$.
\end{theorem}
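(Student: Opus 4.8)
The plan is to reduce the statement to Theorem \ref{t8} by verifying that the scale family with a decreasing baseline density fulfils the hypotheses of that theorem. Conditions (ii) and (iii) stated here coincide verbatim with conditions (ii) and (iii) of Theorem \ref{t8}, so the only thing that remains to be checked is condition (i) of Theorem \ref{t8}, namely that $\Fb(x;\la)=\Fb(\la x)$ is decreasing and convex in $\la$ for every fixed $x\in\R_+$.

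First I would differentiate once with respect to $\la$. Writing $\Fb(\la x)=\int_{\la x}^{\infty} f(t)\,{\rm d}t$, we obtain $\partial \Fb(\la x)/\partial \la=-x f(\la x)$, which is $\leq 0$ because $x\geq 0$ and $f\geq 0$; hence $\Fb(x;\la)$ is decreasing in $\la$. Next I would differentiate a second time to get $\partial^2 \Fb(\la x)/\partial \la^2=-x^2 f'(\la x)$. This is where condition (i) of the present theorem enters: since $f$ is decreasing on $\R_+$ we have $f'(\la x)\leq 0$, so that $-x^2 f'(\la x)\geq 0$ and $\Fb(x;\la)$ is convex in $\la$. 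Together these two computations establish condition (i) of Theorem \ref{t8}.

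With all three hypotheses of Theorem \ref{t8} now in force, applying that theorem directly yields $Y^*_{2:2}\leq_{{\rm st}} Y_{2:2}$, which completes the argument. No genuine obstacle arises in this proof; the only point that warrants a little care is the second-derivative computation together with the observation that convexity of the scale survival function in $\la$ is precisely equivalent to the monotonicity of the baseline density $f$, which is exactly the content of the replaced hypothesis.
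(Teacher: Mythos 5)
Your proposal is correct and follows the same route as the paper: reduce to Theorem \ref{t8} by checking that $\Fb(\la x)$ is decreasing and convex in $\la$ when $f$ is decreasing. The paper states this implication without computation, whereas you supply the two derivatives $-x f(\la x)\leq 0$ and $-x^2 f'(\la x)\geq 0$ explicitly, which is a welcome but inessential elaboration.
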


\begin{proof}
Obviously, the condition (i) of Theorem \ref{t9} implies the condition (i) of Theorem \ref{t8} which completes the proof.
\end{proof}

\begin{theorem}\label{t10}
Let $\Fb(x;\la_i)=[\Fb(x)]^{\la_i}$ and $\Fb(x;\la^*_i)=[\Fb(x)]^{\la^*_i}$, for $i=1,2$. Under the setup of Theorem \ref{t8}, suppose that the following conditions hold:
\begin{itemize}
\item[{\rm (i)}] $(\la^*_1,\la^*_2)\mathop \preceq \limits^{{\mathop{\rm m}} } (\la_1, \la_2)$, such that $\la_1\geq \la_2$ and $\la^*_1\geq \la^*_2$;
\item[{\rm (ii)}] $C$ is Schur-concave.
\end{itemize}
Then, we have $Y^*_{2:2} \leq_{{\rm st}}Y_{2:2}$.
\end{theorem}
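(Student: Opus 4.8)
The plan is to reduce Theorem~\ref{t10} to the already-established Theorem~\ref{t8} by verifying that the proportional hazard rate survival function satisfies the single structural hypothesis of Theorem~\ref{t8} that is not assumed here verbatim. First I would observe that this theorem inherits, through the phrase ``under the setup of Theorem~\ref{t8}'', all the ambient assumptions: the $X_{\la_i}$'s (and $X_{\la^*_i}$'s) are non-negative with the common associated copula $C$, and $\boldsymbol{I}$ is $\rm LWSAI$ and independent of the severities. Moreover, conditions (i) and (ii) of Theorem~\ref{t10} are word-for-word conditions (ii) and (iii) of Theorem~\ref{t8}. Consequently the only remaining obligation is to confirm condition (i) of Theorem~\ref{t8}, namely that $\Fb(x;\la)=[\Fb(x)]^{\la}$ is decreasing and convex in $\la$ for every fixed $x\in\R_+$.

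To check this, I would write $[\Fb(x)]^{\la}=\exp\big(\la\log\Fb(x)\big)$ and set $a(x)=\log\Fb(x)$. Because $\Fb(x)$ is a survival function we have $\Fb(x)\in[0,1]$, hence $a(x)\leq 0$. Then $\frac{\partial}{\partial\la}\Fb(x;\la)=a(x)\,e^{\la a(x)}\leq 0$, so $\Fb(x;\la)$ is decreasing in $\la$, while $\frac{\partial^2}{\partial\la^2}\Fb(x;\la)=a(x)^2\,e^{\la a(x)}\geq 0$, so $\Fb(x;\la)$ is convex in $\la$. This is exactly the decreasing-and-convex requirement of condition (i) of Theorem~\ref{t8}.

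With this verification in hand, all three hypotheses of Theorem~\ref{t8} are met, so applying that theorem directly yields $Y^*_{2:2}\leq_{\rm st}Y_{2:2}$, completing the argument. This mirrors exactly the reduction used in the proof of Theorem~\ref{t6}, where the same PHR survival function was shown to meet the convexity hypothesis of Theorem~\ref{t3}. I do not anticipate any genuine obstacle: the entire content of the proof is the elementary monotonicity and convexity computation above, and the rest is a matching of hypotheses between the two theorems.
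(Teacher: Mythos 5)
Your proposal is correct and follows exactly the paper's own route: the paper's proof of Theorem~\ref{t10} simply asserts that $\Fb(x;\la)=[\Fb(x)]^{\la}$ ``obviously'' satisfies condition (i) of Theorem~\ref{t8} and then invokes that theorem. You merely make explicit the elementary computation (writing $[\Fb(x)]^{\la}=e^{\la\log\Fb(x)}$ with $\log\Fb(x)\le 0$) that the paper leaves implicit.
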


\begin{proof}
Obviously, $\Fb(x;\la)=[\Fb(x)]^{\la}$ satisfies the condition (i) of Theorem \ref{t8} which completes the proof.
\end{proof}

\begin{theorem}\label{t11}
Let $\Fb(x;\la_i)=\Fb(x)(1-\la_i F(x))$ and $\Fb(x;\la^*_i)=\Fb(x)(1-\la^*_i F(x))$, for $i=1,2$. Under the setup of Theorem \ref{t8}, suppose that the following conditions hold:
\begin{itemize}
\item[{\rm (i)}] $(\la^*_1,\la^*_2)\mathop \preceq \limits^{{\mathop{\rm m}} } (\la_1, \la_2)$, such that $\la_1\geq \la_2$ and $\la^*_1\geq \la^*_2$;
\item[{\rm (ii)}] $C$ is Schur-concave.
\end{itemize}
Then, we have $Y^*_{2:2} \leq_{{\rm st}}Y_{2:2}$.
\end{theorem}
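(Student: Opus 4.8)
The plan is to recognize Theorem \ref{t11} as the specialization of Theorem \ref{t8} to the transmuted-G family, so that the entire argument reduces to checking that the TG survival function fits the single analytic hypothesis of Theorem \ref{t8} that is not already assumed here. Comparing the two statements, conditions (i) and (ii) of Theorem \ref{t11} are verbatim conditions (ii) and (iii) of Theorem \ref{t8}, and the $\rm LWSAI$ assumption on $\boldsymbol{I}$ together with its independence from the $X_{\la_i}$'s ($X_{\la^*_i}$'s) is carried over by the clause ``under the setup of Theorem \ref{t8}.'' Hence the only thing left to verify is condition (i) of Theorem \ref{t8}: that $\Fb(x;\la)=\Fb(x)(1-\la F(x))$ is decreasing and convex in $\la$ for every fixed $x\in\R_+$.

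First I would write $\Fb(x;\la)=\Fb(x)-\la\,\Fb(x)F(x)$ and differentiate in $\la$ with $x$ held fixed. This gives $\partial\Fb(x;\la)/\partial\la=-\Fb(x)F(x)\leq 0$, since $\Fb$ and $F$ are non-negative, so $\Fb(x;\la)$ is decreasing in $\la$. Differentiating once more yields $\partial^2\Fb(x;\la)/\partial\la^2=0$, so $\Fb(x;\la)$ is in fact affine (hence simultaneously convex and concave) in $\la$; in particular it is convex. This is exactly condition (i) of Theorem \ref{t8}.

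With all three hypotheses of Theorem \ref{t8} now in force, I would simply invoke that theorem to conclude $Y^*_{2:2}\leq_{\rm st}Y_{2:2}$. I do not anticipate any genuine obstacle: the TG model is linear in $\la$, so the convexity requirement is met trivially and the monotonicity reduces to the non-negativity of the baseline survival and distribution functions. The proof is therefore of the same one-line reduction type as the proofs of Theorems \ref{t9} and \ref{t10} for the scale and PHR models, the only (minor) point worth flagging being that here convexity holds with equality, so nothing is lost in the reduction.
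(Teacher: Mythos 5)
Your proposal is correct and follows exactly the paper's own route: the paper's proof of Theorem \ref{t11} is a one-line reduction to Theorem \ref{t8}, asserting that $\Fb(x;\la)=\Fb(x)(1-\la F(x))$ "obviously" satisfies its condition (i), which you simply verify explicitly by noting that the function is affine in $\la$ with slope $-\Fb(x)F(x)\leq 0$. Your added detail is a welcome elaboration of the paper's "obviously," but the argument is the same.
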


\begin{proof}
Obviously, $\Fb(x;\la)=\Fb(x)(1-\la F(x))$ satisfies the condition (i) of Theorem \ref{t8} which completes the proof.
\end{proof}

The following example provides a numerical example to illustrate the validity of Theorem \ref{t10}.
\begin{example}\label{ex5}
Let $ X_{\la_i}\thicksim {\rm Pareto}(1,\la_i) $ ($  X_{\la^*_i}\thicksim {\rm Pareto}(1,\la^*_i) $), for $i = 1, 2 $, with the associated  FGM copula, with $\theta=0.7$. Let $(\la_1,\la_2)=(7,2)$, $(\la^*_1,\la^*_2)=(5.5,3.5)$, $p(0,0)=0.89$, $p(0,1)=0.06$, $p(1,0)=0.04$ and $p(1,1)=0.01$. Using Lemma \ref{l8}, we get the condition (ii) of Theorem \ref{t10}, and obviously can be verified that the other conditions are also satisfied. So, we have $Y^*_{2:2}\leq_{{\rm st}}Y_{2:2}$. Figure \ref{fig5} represents the survival function of $Y_{2:2}$ and $Y^*_{2:2}$, which approves with the intended result.
\end{example}

\begin{figure}[!h]
\centerline{\includegraphics[width=8cm,height=8cm]{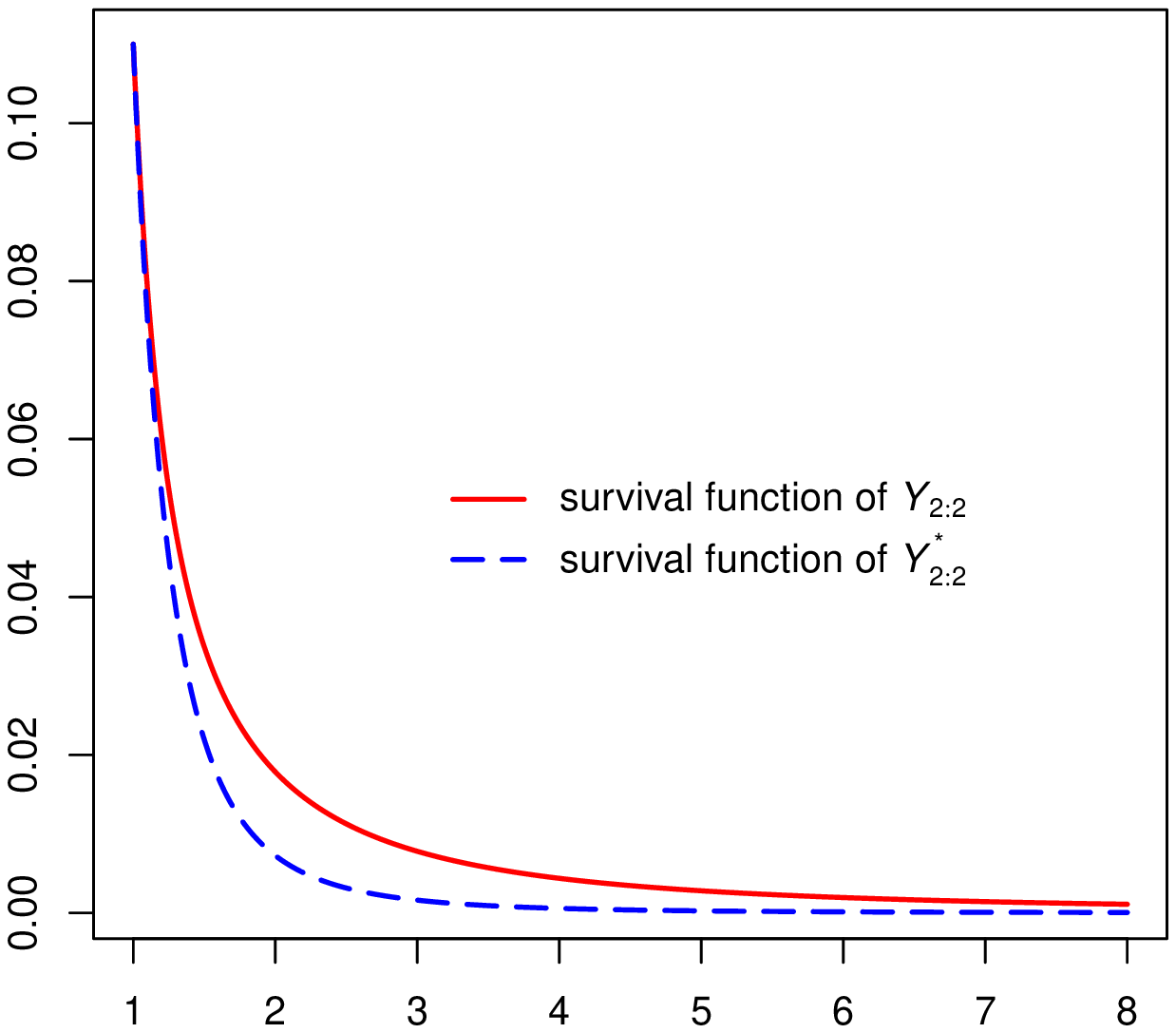}}
\vspace{-0.8cm} \caption{\small{ Plots of the survival functions of $Y_{2:2}$ and $Y^*_{2:2}$ in Example \ref{ex5}.}}\label{fig5}
\end{figure}

The following example illustrates that the conditions (ii) of Theorem \ref{t8} can not be dropped.

\begin{example}\label{ex6}
Under the same setup in Example \ref{ex5}, we take $(\la_1,\la_2)=(2,7)$ with the other unchanged values. It is clear that $\la_1 \ngeq \la_2$, but it can be easily verified that the other conditions of Theorem \ref{t8} are satisfied. Figure \ref{fig6} represents the survival function of $Y_{2:2}$ and $Y^*_{2:2}$, which cross each other.
\end{example}
\begin{figure}[!h]
\centerline{\includegraphics[width=8cm,height=8cm]{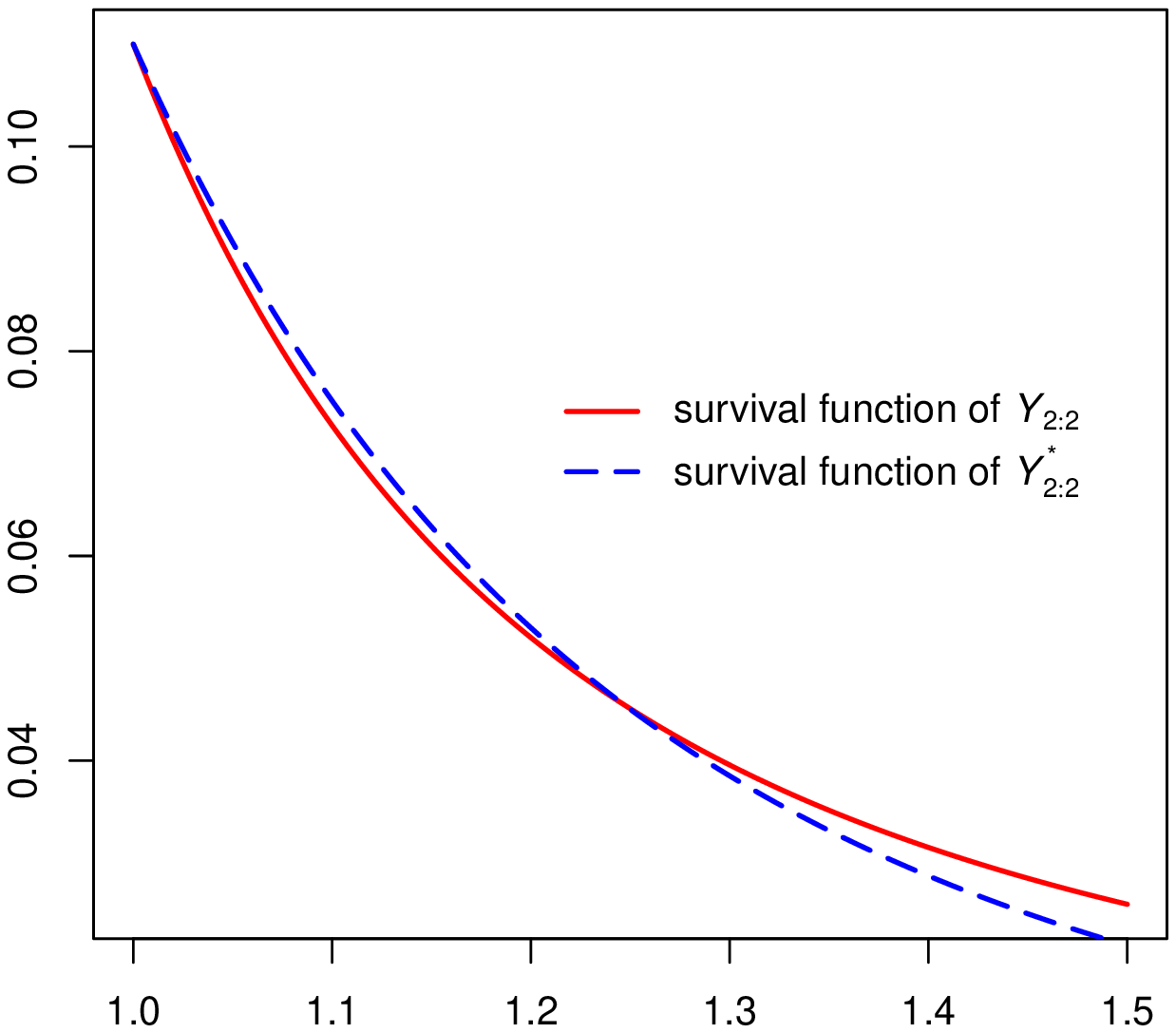}}
\vspace{-0.8cm} \caption{\small{ Plots of the survival functions of $Y_{2:2}$ and $Y^*_{2:2}$ in Example \ref{ex6}.}}\label{fig6}
\end{figure}

The following theorem provides a comparison between the largest claim amounts in two heterogeneous portfolios of risks, in terms of $\lab$.

\begin{theorem}\label{t12}
Let $ X_{\la_1},\ldots , X_{\la_n} $ ($ X_{\la^{*}_1}, \ldots, X_{\la^{*}_n} $) be non-negative random variables with
$ X_{\la_i} \thicksim \Fb(x;\la_i)$ ($ X_{\la^{*}_i} \thicksim \Fb(x;\la^*_{i} )$), $i = 1, \ldots,n $, and associated copula $C$. Further, suppose that $I_{p_1},\ldots, I_{p_n}$ is a set of independent Bernoulli random variables, independent of the $X_{\la_i}$'s
($X_{\la^*_i}$'s), with ${\rm E}[I_{p_i}]=p_i$, $i=1,\ldots,n$. Assume that $\Fb(x;\la)$ is decreasing in $\la$ for any $x\in \R_+$.
Then, we have
\begin{eqnarray*}
\la_i\leq \la^*_i,~ \forall~i=1,\ldots,n~\Longrightarrow Y^*_{n:n} \leq_{{\rm st}}Y_{n:n}.
\end{eqnarray*}
\end{theorem}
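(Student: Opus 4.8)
The plan is to derive an explicit expression for the distribution function $G_{Y_{n:n}}$ that generalizes \eqref{dist1}, and then to compare the two distribution functions termwise using only the monotonicity of the copula. First I would condition on the vector of Bernoulli indicators $(I_{p_1},\ldots,I_{p_n})$. Because the $I_{p_i}$ are mutually independent and independent of the severities, and because $I_{p_i} X_{\la_i}=0\leq x$ whenever $I_{p_i}=0$ (for $x\geq 0$), conditioning on the event $\{I_{p_i}=1\text{ for }i\in A,\ I_{p_i}=0\text{ for }i\notin A\}$ reduces $\{Y_{n:n}\leq x\}$ to $\{X_{\la_i}\leq x\text{ for all }i\in A\}$. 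Summing over all subsets $A\subseteq\{1,\ldots,n\}$ then gives, for $x\geq 0$,
\begin{equation*}
G_{Y_{n:n}}(x)=\sum_{A\subseteq\{1,\ldots,n\}}\Big(\prod_{i\in A}p_i\prod_{i\notin A}(1-p_i)\Big)\,C\big(\ub^{(A)}(x)\big),
\end{equation*}
where $\ub^{(A)}(x)$ is the vector whose $i$th coordinate equals $F(x;\la_i)$ if $i\in A$ and equals $1$ if $i\notin A$; here $C\big(\ub^{(A)}(x)\big)=P(X_{\la_i}\leq x,\ i\in A)$ is the corresponding lower-dimensional margin of the joint law, obtained from Sklar's representation by setting the coordinates outside $A$ to $1$. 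For $n=2$ this collapses to \eqref{dist1}, which serves as a useful check.

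The second step is to write the analogous expression for $G_{Y^*_{n:n}}$. Since the starred portfolio shares the same occurrence probabilities $p_i$ and the same copula $C$, its distribution function is given by the very same sum, with $\ub^{(A)}(x)$ replaced by $\ub^{*(A)}(x)$, whose $i$th coordinate equals $F(x;\la^*_i)$ if $i\in A$ and $1$ otherwise. Thus the two distribution functions are sums over the same index set with identical, non-negative weights, and the comparison reduces to a coordinatewise comparison of the copula arguments.

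The third step supplies that comparison. The hypothesis $\la_i\leq\la^*_i$ together with the assumption that $\Fb(x;\la)$ is decreasing in $\la$ yields $\Fb(x;\la_i)\geq\Fb(x;\la^*_i)$, equivalently $F(x;\la_i)\leq F(x;\la^*_i)$, for every $i$ and every fixed $x\in\R_+$. Hence $\ub^{(A)}(x)\leq\ub^{*(A)}(x)$ coordinatewise for each $A$ (the coordinates outside $A$ being equal to $1$ in both). Because a copula is non-decreasing in each of its arguments, $C\big(\ub^{(A)}(x)\big)\leq C\big(\ub^{*(A)}(x)\big)$ for every $A$. Summing against the common non-negative weights gives $G_{Y_{n:n}}(x)\leq G_{Y^*_{n:n}}(x)$ for all $x\geq 0$; for $x<0$ both functions vanish. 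By the definition of the usual stochastic order this is exactly $Y^*_{n:n}\leq_{\rm st}Y_{n:n}$.

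In contrast to Theorems \ref{t1}--\ref{t3}, no majorization, Schur-convexity, or log-concavity is needed here, so Lemma \ref{mkl} plays no role. The only genuinely delicate point is the bookkeeping in step one: one must justify that conditioning on a given support pattern $A$ really does turn the maximum into a maximum over the active severities, and that the resulting probability is the $A$-margin $C\big(\ub^{(A)}\big)$ of the copula. Once that expansion is in place, the rest is immediate from the two structural features that the weights and the copula coincide across the two portfolios and that $C$ is monotone. I therefore expect step one to be the main (and only modest) obstacle.
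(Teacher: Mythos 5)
Your proof is correct and follows essentially the same route as the paper: the paper likewise conditions on the indicator pattern $\mub\in\{0,1\}^n$ to obtain $G_{Y_{n:n}}(x)=\sum_{\mub}p(\mub)\,C\big([F(x;\la_1)]^{\mu_1},\ldots,[F(x;\la_n)]^{\mu_n}\big)$ (your subset $A$ is just $\{i:\mu_i=1\}$, and $[F]^{0}=1$ encodes your ``set the coordinate to $1$'' convention), and then concludes by the monotonicity of $C$ in each argument together with $F(x;\la)$ increasing in $\la$. Your write-up merely spells out the termwise comparison that the paper states as immediate.
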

\begin{proof}
Denote $p(\mub)={\rm P}(I_{p_{1}}=\mu_1,\ldots,I_{p_{{n}}}=\mu_{n})$. The distribution function of $Y_{n:n}$ can be obtained as follows:
\begin{eqnarray}\label{eq13}
G_{Y_{n:n}}(x)&=&{\rm P}\bigg(Y_1\leq x,\ldots, Y_n\leq x\bigg)\nonumber\\
&=&{\rm P}\bigg(I_{p_{1}}X_{\la_{1}}\leq x,\ldots, I_{p_{{n}}}X_{\la_{{n}}}\leq x\bigg)\nonumber\\
&=& \sum \limits_{\mub \in \{0,1\}^{n}} p(\mub)~ {\rm P}\bigg(I_{p_{1}}X_{\la_{1}}\leq x,\ldots, I_{p_{{n}}}X_{\la_{{n}}}\leq x|I_{p_{1}}=\mu_1,\ldots, I_{p_{{n}}}=\mu_{n}\bigg)\nonumber\\
&=& \sum \limits_{\mub \in \{0,1\}^{n}} p(\mub)~ {\rm P}\bigg(\mu_1 X_{\la_{1}}\leq x,\ldots, \mu_n X_{\la_{{n}}}\leq x|I_{p_{1}}=\mu_1,\ldots, I_{p_{{n}}}=\mu_{n}\bigg)\nonumber\\
&=& \sum \limits_{\mub \in \{0,1\}^{n}} p(\mub)~ {\rm P}\bigg(\mu_1 X_{\la_{1}}\leq x,\ldots, \mu_n X_{\la_{{n}}}\leq x\bigg)\nonumber\\
&=&\sum \limits_{\mub \in \{0,1\}^{n}} p(\mub)~ C\bigg([F(x;\la_{1})]^{\mu_1},\ldots,[F(x;\la_{{n}})]^{\mu_{n}}\bigg).
\end{eqnarray} 
Based on decreasing property of $\Fb(x;\la)$ in $\la$ and the nature of copula, we immediately conclude that $G_{Y_{n:n}}(x)$ is increasing in $\la_i$, for $i=1,\ldots,n$. Hence, the desired result holds.
\end{proof}

The following theorem represents the impact due to degree of dependence in comparison the largest claim amounts in two heterogeneous portfolios of risks.

\begin{theorem}\label{t13}
Let $ X_{\la_1},\ldots , X_{\la_n} $ be non-negative random variables with
$ X_{\la_i} \thicksim \Fb(x;\la_i)$, $i = 1, \ldots,n $, and associated copula $C$ ($C^*$). In addition, suppose that $I_{p_1},\ldots, I_{p_n}$ is a set of independent Bernoulli random variables, independent of the $X_{\la_i}$'s, with ${\rm E}[I_{p_i}]=p_i$, $i=1,\ldots,n$. Then, we have
\begin{eqnarray*}
C\prec C^*~\Longrightarrow Y^*_{n:n} \leq_{{\rm st}}Y_{n:n}.
\end{eqnarray*}
\end{theorem}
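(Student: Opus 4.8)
The plan is to build directly on the explicit distribution function derived in the proof of Theorem~\ref{t12}. The key observation is that the representation \eqref{eq13} depends on the dependence structure \emph{only} through the copula, evaluated at the points $\big([F(x;\la_1)]^{\mu_1},\ldots,[F(x;\la_n)]^{\mu_n}\big)$, whereas the mixing weights $p(\mub)$ (determined by the common parameters $p_1,\ldots,p_n$ via the independence of the $I_{p_i}$) and the marginal survival functions $\Fb(x;\la_i)$ are identical for the two portfolios. Since here the two portfolios share the same marginals and the same Bernoulli parameters and differ solely in their copula ($C$ versus $C^*$), I would first record that the same computation that produced \eqref{eq13} gives
\begin{equation*}
G_{Y_{n:n}}(x)=\sum_{\mub\in\{0,1\}^n} p(\mub)\, C\Big([F(x;\la_1)]^{\mu_1},\ldots,[F(x;\la_n)]^{\mu_n}\Big)
\end{equation*}
and
\begin{equation*}
G_{Y^*_{n:n}}(x)=\sum_{\mub\in\{0,1\}^n} p(\mub)\, C^*\Big([F(x;\la_1)]^{\mu_1},\ldots,[F(x;\la_n)]^{\mu_n}\Big).
\end{equation*}

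Next I would invoke the hypothesis $C\prec C^*$, which by Definition~\ref{def1} means $C(\vb)\leq C^*(\vb)$ for every $\vb\in[0,1]^n$. Applying this pointwise at $\vb=\big([F(x;\la_1)]^{\mu_1},\ldots,[F(x;\la_n)]^{\mu_n}\big)\in[0,1]^n$ for each fixed $\mub$, and using $p(\mub)\geq0$, I would compare the two sums term by term to conclude that $G_{Y_{n:n}}(x)\leq G_{Y^*_{n:n}}(x)$ for all $x$. Passing to survival functions, this reads $\Fb_{Y^*_{n:n}}(x)\leq\Fb_{Y_{n:n}}(x)$ for all $x$, which is exactly the definition of $Y^*_{n:n}\leq_{\rm st}Y_{n:n}$.

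There is no genuine obstacle in this argument: it is a direct pointwise application of the PLOD ordering inside a fixed convex combination, and requires no monotonicity or convexity assumptions on the marginals or on $h$. The only point that must be verified with care is that the weights $p(\mub)$ and the evaluation points are the same across the two portfolios, so that the comparison is truly term by term; this holds precisely because only the copula is varied while the $p_i$ and the $\Fb(x;\la_i)$ are held fixed. Once that is noted, the nonnegativity of the weights and the hypothesis $C\prec C^*$ deliver the conclusion immediately.
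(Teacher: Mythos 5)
Your proof is correct and follows exactly the route the paper takes: the paper's own proof simply cites the representation \eqref{eq13} together with Definition \ref{def1}, which is precisely the term-by-term pointwise comparison you spell out. No differences worth noting.
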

\begin{proof}
By \eqref{eq13} and Definition \ref{def1}, the proof is immediately completed.
\end{proof}


The following theorem provides a comparison between the largest claim amounts in two heterogeneous portfolios of risks, in terms of $\lab$ and degree of dependence.

\begin{theorem}\label{t14}
Let $ X_{\la_1},\ldots , X_{\la_n} $ ($ X_{\la^{*}_1}, \ldots, X_{\la^{*}_n} $) be non-negative random variables with
$ X_{\la_i} \thicksim \Fb(x;\la_i)$ ($ X_{\la^{*}_i} \thicksim \Fb(x;\la^*_{i} )$), $i = 1, \ldots,n $, and associated copula $C$ ($C^*$). Furthermore, suppose that $I_{p_1},\ldots, I_{p_n}$ is a set of independent Bernoulli random variables, independent of the $X_{\la_i}$'s
($X_{\la^*_i}$'s), with ${\rm E}[I_{p_i}]=p_i$, $i=1,\ldots,n$. Assume that $\Fb(x;\la)$ is decreasing in $\la$ for any $x\in \R_+$.
Then, we have
\begin{eqnarray*}
C\prec C^* ~\text{and}~\la_i\leq \la^*_i,~ \forall~i=1,\ldots,n~\Longrightarrow Y^*_{n:n} \leq_{{\rm st}}Y_{n:n}.
\end{eqnarray*}
\end{theorem}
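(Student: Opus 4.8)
The plan is to prove Theorem \ref{t14} by chaining the two one-parameter comparisons already established in Theorems \ref{t12} and \ref{t13}, using the transitivity of $\leq_{{\rm st}}$ and a single intermediate portfolio. Since Theorem \ref{t12} varies only the marginal parameters (with the copula held fixed) and Theorem \ref{t13} varies only the dependence structure (with the marginals held fixed), the combined statement should follow by first moving the copula from $C^*$ to $C$ and then moving the parameter vector from $\lab^*$ to $\lab$.

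Concretely, I would introduce an auxiliary largest claim amount $\tilde{Y}_{n:n}$ corresponding to the portfolio built from the marginal parameters $\la^*_1,\ldots,\la^*_n$ together with the \emph{unstarred} copula $C$, keeping the same independent Bernoulli occurrence variables $I_{p_1},\ldots,I_{p_n}$. By the representation \eqref{eq13}, its distribution function is $\sum_{\mub\in\{0,1\}^n} p(\mub)\, C\big([F(x;\la^*_1)]^{\mu_1},\ldots,[F(x;\la^*_n)]^{\mu_n}\big)$, so $\tilde{Y}_{n:n}$ shares the marginals of $Y^*_{n:n}$ and the dependence structure of $Y_{n:n}$.

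The first step is a pure dependence comparison: the marginals are fixed at $\la^*_i$ and the occurrence probabilities are unchanged, while the two copulas satisfy $C\prec C^*$. Applying Theorem \ref{t13} with common parameter vector $\lab^*$ gives $Y^*_{n:n}\leq_{{\rm st}}\tilde{Y}_{n:n}$, because by \eqref{eq13} the larger copula inflates the distribution function pointwise. The second step is a pure marginal comparison: the copula is fixed at $C$, the parameter vectors satisfy $\la_i\leq\la^*_i$ for all $i$, and $\Fb(x;\la)$ is decreasing in $\la$ by hypothesis. Applying Theorem \ref{t12} then yields $\tilde{Y}_{n:n}\leq_{{\rm st}}Y_{n:n}$. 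Combining the two via transitivity of $\leq_{{\rm st}}$ gives $Y^*_{n:n}\leq_{{\rm st}}\tilde{Y}_{n:n}\leq_{{\rm st}}Y_{n:n}$, which is the claim.

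The only real obstacle is bookkeeping rather than analysis: one must check that the directions of the two inequalities are consistent and that the intermediate portfolio simultaneously meets the hypotheses of both theorems. In particular, Theorem \ref{t13} must be applied verbatim with $\lab^*$ playing the role of the generic parameter vector, which is legitimate since that theorem imposes no condition linking the copula to the marginals beyond the representation \eqref{eq13}; and the monotonicity-in-$\la$ assumption inherited in Theorem \ref{t14} is exactly the single hypothesis required by Theorem \ref{t12}. Once the starred/unstarred labels are aligned, no further computation is needed.
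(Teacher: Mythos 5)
Your proof is correct and follows essentially the same route as the paper: both arguments chain Theorems \ref{t12} and \ref{t13} through a single intermediate portfolio and invoke transitivity of $\leq_{\rm st}$. The only (immaterial) difference is the choice of intermediate --- you use the portfolio with parameters $\lab^*$ and copula $C$, whereas the paper uses the one with parameters $\lab$ and copula $C^*$; and your explicit matching of each step to the correct theorem is in fact cleaner than the paper's, which contains a citation slip at this point.
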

\begin{proof}
Let $V_{n:n}$, $Z_{n:n}$ and $W_{n:n}$ be the largest claim amounts from the portfolios $(I_{p_1} X_{\la^*_1},\ldots,I_{p_n}  X_{\la^*_n})$ with associated copula $C^*$, $(I_{p_1} X_{\la_1},\ldots,I_{p_n}X_{\la_n})$ with associated copula $C^*$, and $(I_{p_1} X_{\la_1},\ldots,I_{p_n}X_{\la_n})$ with associated copula $C$, respectively. It is easily seen that $Y^*_{n:n}\mathop = \limits^{{\mathop{\rm st}} }V_{n:n}$ and $Y_{n:n}\mathop = \limits^{{\mathop{\rm st}} }W_{n:n}$. On the other hand, Theorem \ref{t13} and Theorem \ref{t14} imply that $V_{n:n} \leq_{{\rm st}}Z_{n:n}$ and $Z_{n:n} \leq_{{\rm st}}W_{n:n}$, respectively. Hence, the proof is completed.
\end{proof}

The three following theorems consider the scale, PHR and TG models as the special cases of Theorem \ref{t14}.

\begin{theorem}\label{t15}
Let $\Fb(x;\la_i)=\Fb(\la_i x)$ and $\Fb(x;\la^*_i)=\Fb(\la^*_i x)$, for $i=1,\ldots,n$. Under the setup of Theorem \ref{t14},
Then, we have $Y^*_{n:n} \leq_{{\rm st}}Y_{n:n}$.
\end{theorem}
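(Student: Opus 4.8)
The plan is to obtain Theorem \ref{t15} as an immediate corollary of Theorem \ref{t14}, in exactly the manner the scale-family specializations in Theorems \ref{t5} and \ref{t9} were handled. The only hypothesis of Theorem \ref{t14} that is not inherited verbatim from its setup is the requirement that $\Fb(x;\la)$ be decreasing in $\la$ for every $x \in \R_+$; everything else, including the dependence ordering $C \prec C^*$ and the componentwise inequalities $\la_i \leq \la^*_i$, is carried over as part of the setup of Theorem \ref{t14}. So the single point to verify is that the scale-family form $\Fb(x;\la) = \Fb(\la x)$ meets this monotonicity condition.

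First I would note that for each fixed $x \in \R_+$ the map $\la \mapsto \la x$ is nondecreasing in $\la$, while the baseline survival function $\Fb$ is nonincreasing in its argument. Composing the two shows that $\la \mapsto \Fb(\la x)$ is nonincreasing, so $\Fb(x;\la) = \Fb(\la x)$ is decreasing in $\la$ for every $x \in \R_+$. This is precisely the assumption demanded by Theorem \ref{t14}, so the scale family is a legitimate instance of its general marginal model.

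With the monotonicity verified, I would simply invoke Theorem \ref{t14} to conclude $Y^*_{n:n} \leq_{{\rm st}} Y_{n:n}$; the mechanism doing the underlying work is the mixture representation \eqref{eq13} of $G_{Y_{n:n}}$, which is monotone in each $\la_i$ and in the copula in the required directions. I expect no genuine obstacle here: the entire content reduces to the elementary observation that composing a nonincreasing survival function with the increasing scale map preserves monotonicity in the scale parameter, after which Theorem \ref{t14} transfers without any additional argument.
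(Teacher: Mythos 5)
Your proposal is correct and matches the paper's treatment: the paper states Theorem \ref{t15} as an immediate special case of Theorem \ref{t14} without writing out a proof, and the only point requiring verification is exactly the one you check, namely that $\Fb(x;\la)=\Fb(\la x)$ is decreasing in $\la$ for each fixed $x\in\R_+$. Your observation that (unlike Theorems \ref{t5} and \ref{t9}) no convexity, and hence no condition on $f$, is needed here is also consistent with the statement of Theorem \ref{t15}.
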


\begin{theorem}\label{t16}
Let $\Fb(x;\la_i)=[\Fb(x)]^{\la_i}$ and $\Fb(x;\la^*_i)=[\Fb(x)]^{\la^*_i}$, for $i=1,\ldots,n$. Under the setup of Theorem \ref{t14}, we have $Y^*_{n:n} \leq_{{\rm st}}Y_{n:n}$.
\end{theorem}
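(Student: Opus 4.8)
The plan is to recognize Theorem \ref{t16} as a direct specialization of Theorem \ref{t14} to the proportional hazard rate family, so that the whole argument collapses to verifying the single hypothesis that Theorem \ref{t14} imposes on the marginal survival functions. First I would note that, apart from the copula ordering $C\prec C^*$ and the componentwise inequalities $\la_i\leq\la^*_i$ (both of which are assumed here verbatim), Theorem \ref{t14} requires only that $\Fb(x;\la)$ be decreasing in $\la$ for every fixed $x\in\R_+$. Consequently, all the real work is confined to checking this monotonicity for the PHR specification $\Fb(x;\la)=[\Fb(x)]^{\la}$.

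Next I would observe that, since $\Fb(x)$ is a baseline survival function, we have $\Fb(x)\in[0,1]$ for every $x$, and for a base lying in $[0,1]$ the map $\la\mapsto[\Fb(x)]^{\la}$ is nonincreasing in $\la>0$. This is immediate from $\frac{\partial}{\partial\la}[\Fb(x)]^{\la}=[\Fb(x)]^{\la}\log\Fb(x)\leq 0$, the inequality holding because $\log\Fb(x)\leq 0$ on $[0,1]$. Hence $\Fb(x;\la^*_i)=[\Fb(x)]^{\la^*_i}$ is decreasing in $\la^*_i$ as well, and the decreasing-in-$\la$ hypothesis of Theorem \ref{t14} is satisfied.

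With that single verification in place, I would simply invoke Theorem \ref{t14} with $\Fb(x;\la)=[\Fb(x)]^{\la}$: the assumptions $C\prec C^*$ and $\la_i\leq\la^*_i$ for all $i$ then deliver $Y^*_{n:n}\leq_{{\rm st}}Y_{n:n}$, completing the argument. I do not anticipate any genuine obstacle; the only point deserving attention is the elementary but essential fact that raising a number in $[0,1]$ to a larger power decreases it, which is precisely what makes the PHR survival function decreasing in the proportionality parameter, and everything else is inherited directly from the general Theorem \ref{t14}. This mirrors the one-line reductions already used for the analogous special cases in the proofs of Theorem \ref{t10} and Theorem \ref{t16}'s scale-family counterpart Theorem \ref{t15}.
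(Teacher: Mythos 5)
Your proposal is correct and matches the paper's approach: the paper states Theorem \ref{t16} without a separate proof precisely because, as you argue, the only hypothesis of Theorem \ref{t14} needing verification is that $\Fb(x;\la)=[\Fb(x)]^{\la}$ is decreasing in $\la$, which follows since $\Fb(x)\in[0,1]$. Your derivative computation is the same elementary check the paper relies on implicitly (and spells out in the analogous Theorem \ref{t10}), so nothing is missing.
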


\begin{theorem}\label{t17}
Let $\Fb(x;\la_i)=\Fb(x)(1-\la_i F(x))$ and $\Fb(x;\la^*_i)=\Fb(x)(1-\la^*_i F(x))$, for $i=1,\ldots,n$. Under the setup of Theorem \ref{t14}, we have $Y^*_{n:n} \leq_{{\rm st}}Y_{n:n}$.
\end{theorem}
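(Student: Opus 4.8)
The plan is to read Theorem \ref{t17} as nothing more than the specialization of Theorem \ref{t14} to the transmuted-G family, and therefore to reduce the whole argument to checking the single structural hypothesis that Theorem \ref{t14} imposes on the marginal survival function. Recall that the setup of Theorem \ref{t14} already supplies the copula ordering $C\prec C^*$ and the componentwise ordering $\la_i\leq \la^*_i$ for all $i$, and asks in addition only that $\Fb(x;\la)$ be decreasing in $\la$ for every $x\in\R_+$. Since Theorem \ref{t17} inherits this setup verbatim, I do not need to re-derive the distribution function \eqref{eq13} or re-run the coupling argument behind Theorem \ref{t14}; I only need to confirm that the TG marginal fits the required monotonicity in $\la$.

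Concretely, the first and essentially only computation I would carry out is the partial derivative of the TG survival function $\Fb(x;\la)=\Fb(x)\bigl(1-\la F(x)\bigr)$ with respect to $\la$. This gives $\partial \Fb(x;\la)/\partial\la = -\Fb(x)F(x)$. Because $\Fb(x)$ is a survival function and $F(x)$ its companion distribution function, both factors lie in $[0,1]$ and are non-negative, so the derivative is $\leq 0$ for every $x\in\R_+$. Hence $\Fb(x;\la)$ is decreasing in $\la$, which is exactly the marginal hypothesis of Theorem \ref{t14}. With this property in hand, every condition of Theorem \ref{t14} is satisfied, and invoking that theorem directly delivers the conclusion $Y^*_{n:n}\leq_{\rm st}Y_{n:n}$.

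I expect no real analytical obstacle here: the argument parallels the scale and PHR reductions in Theorems \ref{t15} and \ref{t16}, and the corresponding TG reduction for the bivariate, dependent-indicator case in Theorem \ref{t11}. The one point worth stating carefully is that Theorem \ref{t14} requires only decreasingness in $\la$, not the stronger decreasing-and-convex condition demanded by Theorem \ref{t8}; this is why the verification collapses to the single sign computation above. (For completeness one may note that convexity in $\la$ also holds trivially, since $\Fb(x;\la)$ is affine in $\la$ and its second $\la$-derivative vanishes, but this stronger property is not needed for the present result.)
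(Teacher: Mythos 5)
Your proposal is correct and matches the paper's (implicit) argument exactly: the paper states Theorem \ref{t17} as a special case of Theorem \ref{t14}, and the only thing to check is that the TG survival function is decreasing in $\la$, which your computation $\partial \Fb(x;\la)/\partial\la=-\Fb(x)F(x)\leq 0$ establishes. Your remark that only monotonicity (not convexity) in $\la$ is needed here, unlike in Theorem \ref{t8}, is also accurate.
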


Another important distribution used as the distribution of claim severity from policyholders is Weibull distribution, which is a special case of the scale model. $X$ has the Weibull distribution with parameters $\al$ and $\la$, denoted by $X \thicksim {\rm Wei}(\al,\la)$, if its survival function is given by
\begin{equation*}
\Fb(x;\al,\la)=e^{-(\la x)^\al},\quad x> 0.
\end{equation*}
The following example provides a numerical example to illustrate the validity of Theorem \ref{t15}.

\begin{example}\label{ex7}
Let $ X_{\la_i}\thicksim {\rm Wei}(3,\la_i) $ ($  X_{\la^*_i}\thicksim {\rm Wei}(3,\la^*_i) $), for $i = 1, 2,3$, with the associated Frank copula, which introduced by Frank\cite{frank}, of the form $$C_{\theta}(v_1,v_2,v_3)=-\frac{1}{\theta}\log \left(1+\frac{(e^{-\theta v_1}-1)(e^{-\theta v_2}-1)(e^{-\theta v_3}-1)}{(e^{-\theta}-1)^2}\right),$$
 where $\theta\in(0,\infty)$. Further, suppose that $I_{p_1}, I_{p_2}, I_{p_3}$ is a set of independent Bernoulli random variables, independent of the $X_{\la_i}$'s ($X_{\la^*_i}$'s), with ${\rm E}[I_{p_i}]=p_i$ , for $i=1,2,3$. We take $(\la_1,\la_2,\la_3)=(0.5,0.7,0.3)$, $(\la^*_1,\la^*_2,\la^*_3)=(0.51,0.7,0.33)$, $(p_1,p_2,p_3)=(0.01,0.02,0.07)$ and $\theta=0.6$. Obviously, the conditions of Theorem \ref{t15} are satisfied. So, we have $Y^*_{3:3}\leq_{{\rm st}}Y_{3:3}$. Figure \ref{fig7} represents the survival function of $Y_{3:3}$ and $Y^*_{3:3}$, which coincides with the intended result.
\end{example}

\begin{figure}[!h]
\centerline{\includegraphics[width=8cm,height=8cm]{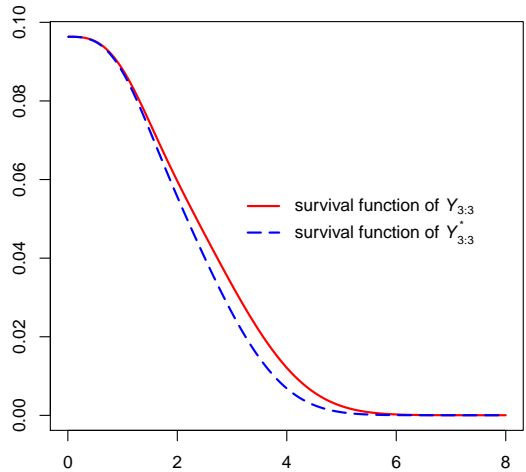}}
\vspace{-0.8cm} \caption{\small{ Plots of the survival functions of $Y_{3:3}$ and $Y^*_{3:3}$ in Example \ref{ex7}.}}\label{fig7}
\end{figure}

\section*{Conclusion}
In this paper, under some certain conditions, we discussed stochastic comparisons between the
largest claim amounts under dependency of severities in the sense of usual stochastic ordering in a general model, which particularly includes some important models such as the scale, PHR and TG models. However, we applied some distributions to illustrate the results.

\end{document}